\newtheorem{theorem}{Theorem}
\newtheorem{lemma}{Lemma}
\theoremstyle{remark}
\newtheorem*{remark}{Remark}
\newcommand{\wwto}{\stackrel{w}{\to}}
\newcommand{\supp}{\text{supp}  }
\newcounter{foo}
\begin{document}

\title{Bound States at Threshold Resulting from Coulomb Repulsion}
\author{Dmitry K. Gridnev}
\email[Electronic address:] {gridnev|at|fias.uni-frankfurt.de}
\affiliation{FIAS, Ruth-Moufang Strasse 1, D--60438 Frankfurt am Main, Germany}
\altaffiliation[On leave from:  ]{ Institute of Physics, St. Petersburg State
University, Ulyanovskaya 1, 198504 Russia}
\begin{abstract}
The eigenvalue absorption for a many--particle Hamiltonian depending on a
parameter is analyzed in the framework of
non--relativistic quantum mechanics. The long--range part of pair potentials is
assumed to be pure Coulomb and no restriction on the particle statistics is
imposed.
It is proved that if the lowest
dissociation threshold corresponds to the decay into two likewise non--zero
charged clusters then the bound state,
which approaches the threshold, does not spread and eventually
becomes the bound state at threshold. The obtained results have
applications in atomic and nuclear physics. In particular,  we prove that an
atomic
ion with the critical charge $Z_{cr}$ and $N_e$ electrons
has a bound state at threshold given that $Z_{cr} \in (N_e -2 , N_e -1)$,
whereby the electrons are treated as fermions and the mass of the nucleus
is finite.
\end{abstract}

\maketitle


\section{Introduction}\label{sec:1}

In Refs.~\onlinecite{1,2} it was proved that a critically bound N--body system, where none
of the subsystems has bound states with $E \leq 0$ and 
 particle pairs have no zero energy resonances,  
has a square integrable state at zero energy. The
condition on the absence of 2--body zero energy resonances was shown to be
essential in the three--body case \cite{1}. Here we consider the $N$--particle
system, 
where particles can be charged and apart from short--range pair--interactions
may also interact 
via Coulomb attraction/repulsion. The formation of bound states at threshold in
the two--particle case when the particles Coulomb repel each other is
well--studied \cite{3,gest}. In the three--particle case there is a well--known proof
\cite{ostenhof} that a two--electron ion with an infinitely heavy nucleus has a bound state at
threshold, when the nuclear charge becomes critical. 

Our aim here is to
investigate the general many--particle case. Here we generalize the result in
Ref.~\onlinecite{ostenhof} to the case of many electron ions with Fermi statistics
and finite nuclear mass. In the proofs we shall use the bounds on Green's
functions from Ref.~\onlinecite{3} as well as the technique of spreading sequences from
Ref.~\onlinecite{1}, 
that is we prove the eigenvalue absorption by demonstrating that the wave
functions corresponding to bound states do not spread, c.f. Theorem~1 in
Ref.~\onlinecite{1}. A different approach based on the calculus of variations was recently developed in Ref.~\onlinecite{frank}, where the authors 
give an alternative proof to the result in Ref.~\onlinecite{ostenhof}. The authors in Ref.~\onlinecite{frank} indicate that their approach could be 
generalized to the many--particle case. In the present paper as well as in Refs.~\onlinecite{ostenhof,frank} one uses essentially the same idea, namely, one uses the fact that 
the weak limit of ground state wave functions is a solution to the Schr\"odinger equation at the threshold. The hardest part is to prove that the weak limit is not 
identically zero. Our approach differs from the ones in Refs.~\onlinecite{ostenhof,frank} in that we use the upper  bounds on the two--particle Green's functions\cite{3}. 

The paper is organized as follows. In Sec.~\ref{sec:2} we introduce notations,
formulate the main theorem and prove a number of technical lemmas. 
In Sec.~\ref{sec:3} we derive an upper bound on the Green's function, which is
used in Sec.~\ref{sec:4} for the proof of Theorem~\ref{th:main}. 
In Sec.~\ref{sec:5} we discuss two main applications 
of Theorem~\ref{th:main} concerning the stability diagram of three Coulomb
charges (Theorem~\ref{th:martin} in Sec.~\ref{sec:5.1}) and negative atomic
ions 
(Theorem~~\ref{th:phys} in Sec.~\ref{sec:5.2}). In Appendix~\ref{sec:6} we
derive various criteria for non--spreading sequences. 

Let us mention physical applications. The effect when a
size of a bound system increases near the threshold and by far exceeds the
scales set by attractive parts of potentials was discovered in
neutron halos, helium dimer, Efimov states, for
discussion see Refs.~ \onlinecite{fedorov,zhukov,efimov,hansen0}. Here we
demonstrate that in a many--particle system similarly to the two--body case
\cite{3,gest} a Coulomb repulsion between possible decay
products blocks the spreading of bound states and forces an $L^2$ bound state at
threshold. In
nuclear physics, this, in particular, explains why contrary to neutron halos no
proton halos are found \cite{hansen}.

\section{Formulation of the Main Theorem}\label{sec:2}

We consider the $N$--particle Hamiltonian ($N \geq 3$)
\begin{gather}
    H(\lambda) = H_0 + V(\lambda) , \label{xc31}\\
    V(\lambda) := \sum_{1 \leq i<j \leq N} V_{ij} (\lambda) \equiv \sum_{1 \leq
i<j \leq N} \left[ U_{ij} (\lambda; r_i - r_j ) + \frac{q_i
(\lambda) q_j (\lambda)}{|r_i -r_j|}\right] , \label{:xc31}
\end{gather}
where $\lambda \in \mathbb{R}$ is a parameter, $H_0$ is the kinetic energy
operator with the center of mass
removed,  $r_i \in
\mathbb{R}^3$ are particles' position vectors and $q_i (\lambda) \in \mathbb{R}$
denote the particles' charges depending on $\lambda$. We shall assume that
$U_{ij} (\lambda; r) \in L^2 (\mathbb{R}^3) + L^\infty_\infty (\mathbb{R}^3)$
for each given 
$\lambda$. Here $L^\infty_\infty (\mathbb{R}^n)$ 
denotes the space of bounded Borel functions vanishing at infinity. 
We shall also take particle spins into account, though we shall consider only
spin--independent Hamiltonians. 
The Hamiltonian acts in 
$L^2 (\mathbb{R}^{3N-3}) \oplus L^2 (\mathbb{R}^{3N-3}) \oplus \cdots \oplus L^2
(\mathbb{R}^{3N-3}) \equiv L^2 ( \mathbb{R}^{3N-3}; \mathbb{C}^{n_s})$, 
where the direct sum has $n_s = (2s_1 + 1) (2s_2 + 1)\ldots (2s_N + 1)$
summands and $s_i$ denotes the spin of particle $i$. 
Similar notation for the Hilbert space can be found in Refs.~\onlinecite{quotesimon,quotethaller}. 
By Kato's theorem \cite{reed,teschl} $H(\lambda)$ is self--adjoint on 
$D(H_0) = \mathcal{H}^2 (\mathbb{R}^{3N-3}; \mathbb{C}^{n_s}) \subset L^2 ( \mathbb{R}^{3N-3}; \mathbb{C}^{n_s}) $, where 
 $\mathcal{H}^2 (\mathbb{R}^{3N-3}; \mathbb{C}^{n_s}) \equiv \mathcal{H}^2 (\mathbb{R}^{3N-3}) \oplus \cdots \oplus \mathcal{H}^2 (\mathbb{R}^{3N-3})$ and
$\mathcal{H}^2 (\mathbb{R}^{3N-3})$ denotes the 
corresponding Sobolev space \cite{teschl,liebloss}. A function $f \in L^2 (
\mathbb{R}^{3N-3}; \mathbb{C}^{n_s})$ depends explicitly on the arguments as 
$f (x, \sigma_1 , \ldots, \sigma_N)$, where $x \in \mathbb{R}^{3N-3}$ and $\sigma_i
\in \{ \frac{s_i}2 , \frac{s_i}2 - 1, \ldots , - \frac{s_i}2 \}$ are the spin
variables.

We treat the particles with integer spins as bosons and particles with
half--integer spin as fermions. 
$\mathcal{P}$ denotes the orthogonal projection operator on the 
subspace of functions, which are symmetric with respect to
the interchange of bosons and antisymmetric with respect to the interchange of
fermions. We denote the
bottom of the continuous spectrum by 
\begin{equation}
 E_{thr}(\lambda) := \inf \sigma_{ess} \, H(\lambda) \mathcal{P} . 
\end{equation}

We shall use the function  $\eta_\alpha \colon
\mathbb{R}^n \to \mathbb{R}$, which determines the asymptotic behavior at
infinity
\begin{equation}\label{xweta}
 \eta_\alpha (r) := \chi_{\{r|\; |r| \leq 1\}} + \chi_{\{r|\; |r| > 1\}}
|r|^{\alpha} , 
\end{equation}
where $r \in \mathbb{R}^n, \alpha \in \mathbb{R}_+$ and $\chi_A$ always denotes
the
characteristic function of the set $A$.
Note that $\eta_\alpha (r)$ is continuous and $\eta_{\alpha_1} \eta_{\alpha_2} =
\eta_{\alpha_1 \alpha_2}$. We make the following assumptions
\begin{list}{R\arabic{foo}}
{\usecounter{foo}
    \setlength{\rightmargin}{\leftmargin}}
\item $H(\lambda)$ is defined for an infinite sequence of parameter values
$\lambda_1, \lambda_2, \ldots $ and $\lambda_{cr}$, where $\lim_{n\to \infty}
\lambda_n = \lambda_{cr}$. For all $ \lambda_n $ there is $E(\lambda_n) \in
\mathbb{R}, \psi_n \in D(H_0)$
such that $H(\lambda_n) \psi_n = E(\lambda_n) \psi_n$, where $\| \psi_n \| = 1$,
 $\mathcal{P}\psi_n = \psi_n$
and $E(\lambda_n) < E_{thr}(\lambda_n)$. Besides, $\lim_{n \to \infty}
E(\lambda_n) = \lim_{n \to \infty} E_{thr}(\lambda_n) = E_{thr}
(\lambda_{cr})$.

\item $\sup_{\lambda=\lambda_n, \lambda_{cr}}\; |U_{ij} (\lambda; y )| \leq
\tilde U (y)$ and  $\sup_{\lambda=\lambda_n, \lambda_{cr}}\; |q_i(\lambda)
q_j(\lambda)| \leq q_0$, where $\tilde
U(y)$ is such that
$\eta_\delta (y) \tilde U(y) \in L^2(\mathbb{R}^3) +
L^\infty_\infty(\mathbb{R}^3)$ and $\delta \in (3/2, 2)$, $q_0 \in (0, \infty)$
are fixed constants.
Additionally, $\lim_{n \to \infty} \bigl\| \bigl[ V(\lambda_n) - V(\lambda_{cr})
\bigr] f \bigr\| =
0$ for all $f \in C^\infty_0 (\mathbb{R}^{3N-3} )$.
\end{list}

Let $a = 1, 2,
\ldots, (2^{N-1}-1)$ label all the distinct ways \cite{ims} of partitioning
particles into two non--empty clusters $\mathfrak{C}^{a}_1$ and
$\mathfrak{C}^{a}_2$. We define the Jacobi intercluster coordinates for the
clusters $\mathfrak{C}^{a}_{1,2}$ as 
$x^{a,1}_{i}$ and $x^{a,2}_{j}$ respectively, where $i = 1,2, \ldots,
(\#\mathfrak{C}^{a}_1 - 1)$ and
$j = 1,2, \ldots, (\#\mathfrak{C}^{a}_2 - 1)$ (the symbol $\#$ denotes the
number
of particles in the corresponding cluster).
By $x^a$ we denote the full set of intercluster
coordinates and we set 
\begin{equation}\label{xiamod}
    |x^a| = \sum_{i=1}^{\# \mathfrak{C}^{a}_1 - 1} |x^{a,1}_{i}| +
\sum_{j=1}^{\#
\mathfrak{C}^{a}_2 - 1} |x^{a,2}_{j}| . 
\end{equation}
$R_a$ points from the center
of mass
of $\mathfrak{C}^{a}_1$ to the center of mass of $\mathfrak{C}^{a}_2$. The full
set of Jacobi coordinates
is $(x^a , R_a) \in \mathbb{R}^{3N-3}$.

We denote the sum of interaction cross terms between the clusters by
\begin{equation}\label{14ia}
I_a (\lambda) := \sum_{\substack{i \in \mathfrak{C}^{a}_1 \\j
\in\mathfrak{C}^{a}_2}}
V_{ij} (\lambda) . 
\end{equation}
The product of net charges of the clusters is defined as
\begin{equation}\label{netcharge}
    Q^a (\lambda) := \sum_{i \in \mathfrak{C}^{a}_1} \sum_{j \in
\mathfrak{C}^{a}_2}
q_i (\lambda) q_j (\lambda) . 
\end{equation}

The projection operators on the proper symmetry subspace for the particles
within clusters $\mathfrak{C}_1^{(a)}$ and $\mathfrak{C}_2^{(a)}$ are
$\mathcal{P}_1^{(a)}$ and $\mathcal{P}_2^{(a)}$ respectively. Namely,
$\mathcal{P}_i^{(a)}$ projects on a subspace of functions, 
which are antisymmetric with respect to the 
interchange of fermions in $\mathfrak{C}_i^{(a)}$ and symmetric with respect to
the interchange of bosons in $\mathfrak{C}_i^{(a)}$ ($i = 1, 2$). 
Naturally,
$\mathcal{P} \mathcal{P}^{(a)}_{1,2} = \mathcal{P}^{(a)}_{1,2} \mathcal{P} =
\mathcal{P}$ and $[\mathcal{P}_1^{(a)} ,\mathcal{P}_2^{(a)}] = 0$. We also
define $\mathcal{P}^{(a)} := \mathcal{P}_1^{(a)} \mathcal{P}_2^{(a)}$. The
Hamiltonian (\ref{xc31}) can be decomposed in the following way
\begin{equation}\label{hthra}
    H(\lambda) = H^{(a)}_{thr} (\lambda) - \frac{\hbar^2}{2\mu_a} \Delta_{R_a} +
I_a (\lambda),
\end{equation}
where $H^{(a)}_{thr} (\lambda)$ is the Hamiltonian of the clusters' intrinsic
motion
and $\mu_a$ denotes the reduced mass derived from clusters' total masses. From
now on without loss of generality we set $\hbar^2/(2\mu_a) = 1$.

It is convenient to treat the Hilbert space as the tensor product
$L^2 (\mathbb{R}^{3N-3}; \mathbb{C}^{n_s}) = L^2
(\mathbb{R}^{3N-6}; \mathbb{C}^{n_s}) \otimes L^2 (\mathbb{R}^{3})$, where the
first term in the product
corresponds to the space associated with $x^a$ coordinates and spin variables, while the second one
refers to the space
associated with the $R_a$ coordinate. In such case the operator $H^{(a)}_{thr}$
has the form
$H^{(a)}_{thr} = H^{a}_{thr} \otimes 1$, where $H^{a}_{thr}$ is the restriction
of $H^{(a)}_{thr}$ to $L^2 (\mathbb{R}^{3N-6}; \mathbb{C}^{n_s})$. The
coordinate $R_a$ is unaffected by permutations of particles 
within the clusters $\mathfrak{C}_1^{(a)}$ or 
$\mathfrak{C}_2^{(a)}$. Therefore, $\mathcal{P}^{(a)} =
\mathcal{P}^{a} \otimes 1$, where $\mathcal{P}^{a}$ denotes  the restriction of
$\mathcal{P}^{(a)}$ to the space associated with $x^a$ coordinates and spin variables.

The set of assumptions is continued as follows. 
\begin{list}{R\arabic{foo}}
{\usecounter{foo}
    \setlength{\rightmargin}{\leftmargin}}
\setcounter{foo}{2}
\item
For $\lambda = \lambda_n , \lambda_{cr}$ and $a=1,\ldots,\mathfrak{N}$ one has
$\inf \sigma
\bigl(H^a_{thr}(\lambda) \mathcal{P}^a\bigr) = E_{thr}(\lambda)$. There is
$|\Delta \epsilon| >0$  such that the following
inequalities hold for $\lambda=\lambda_n, \lambda_{cr}$
\begin{gather}
\inf \sigma_{ess} \bigl(H^a_{thr}(\lambda) \mathcal{P}^a\bigr) \geq
E_{thr}(\lambda) + 2
|\Delta \epsilon| \quad (a=1,\ldots, \mathfrak{N}) , \label{need1}\\
\Bigl[ H^a_{thr} (\lambda) - E_{thr}(\lambda)\Bigr] \mathcal{P}^a \geq |\Delta
\epsilon|
\mathcal{P}^a \quad (a=\mathfrak{N}+1,\ldots,2^{N-1}-1) .\label{need2}
\end{gather}
\end{list}
The requirement R3 says that the bottom of the continuous spectrum of
$H(\lambda)$ is
set by the decomposition into those two clusters that correspond to any of the
decompositions $a=1, \ldots, \mathfrak{N}$. Inequality (\ref{need1}) introduces
a gap
between the ground state energy of the two clusters and other excited states.
For
$a=1,\ldots, \mathfrak{N}$ and $\lambda = \lambda_n, \lambda_{cr}$ we define the
projection operator
acting on $L^2 (\mathbb{R}^{3N-6}; \mathbb{C}^{n_s}) $
\begin{equation}\label{projopa}
    P^a_{thr} (\lambda) = \mathbb{P}^a_{[E_{thr}(\lambda) , E_{thr}(\lambda) +
|\Delta
\epsilon|]}\; ,
\end{equation}
where $\{\mathbb{P}^a_{\Omega}\}$ are spectral projections of
$H^a_{thr}(\lambda)
\mathcal{P}^a$. Note that by R3 the projection operators $P^a_{thr}
(\lambda_n)$, $P^a_{thr} (\lambda_{cr})$
have a finite dimensional range. 

The last assumption introduces the uniform control over the fall off of clusters' wave
functions
\begin{list}{R\arabic{foo}}
{\usecounter{foo}
    \setlength{\rightmargin}{\leftmargin}}
\setcounter{foo}{3}
\item
There are constants $A, \beta
>0$ such that
\begin{equation}\label{perxp}
    \bigl\| e^{\beta |x^a|} P^a_{thr}(\lambda) \bigr\| \leq A
\end{equation}
for $\lambda = \lambda_n , \lambda_{cr}$ and $a=1,2, \ldots, \mathfrak{N}$.
\end{list}
Due to R3 there must exist orthonormal $\varphi^a_i (\lambda)\in D(-\Delta)
\subset L^2(\mathbb{R}^{3N-6}; \mathbb{C}^{n_s})$ for $i=1,2,\ldots, n_a
(\lambda)$ such that
\begin{equation}
 P_{thr}^a (\lambda) = \sum_{i=1}^{n_a (\lambda)} E_i^a (\lambda) \varphi^a_i
(\lambda) \bigl(\cdot, \varphi^a_i (\lambda) \bigr) \quad \textrm{for $a = 1,
\ldots, \mathfrak{N}$ and $\lambda = \lambda_n , \lambda_{cr}$} ,
\end{equation}
where $E_i^a (\lambda) \in [E_{thr}(\lambda), E_{thr}(\lambda) +
|\Delta\epsilon|]$. Note that $H^a_{thr}(\lambda_n) \varphi^a_i (\lambda_n) =
E_i^a (\lambda_n) \varphi^a_i (\lambda_n)$, therefore,
$\|-\Delta \varphi^a_i (\lambda_n)\|$ is uniformly bounded, c.~f. Lemma~1 in
Ref.~\onlinecite{1}. Applying Lemma~\ref{lem:4} below  and using R4 we conclude that
there exists an integer $\omega$ such that $n_a (\lambda_{cr}), n_a
(\lambda_n) \leq \omega$.
\begin{lemma}\label{lem:4}
 Suppose that the orthonormal set of function $\phi_1 , \ldots,
\phi_\mathcal{N} \in D(-\Delta) \subset L^2(\mathbb{R}^d ; \mathbb{C}^{n_s})$ is
such that
$\| - \Delta \phi_i \| \leq T$ and $\| e^{\beta|x|}\phi_i \|\leq A$ for $i = 1,
\ldots, \mathcal{N}$, where $T, A, \beta >0$ are constants. If $d \geq 3$ then
\begin{equation}\label{lbcwros}
 \mathcal{N} \leq C_d \frac{(2 T)^{d/2} |\ln 2A|^d}{(2 \beta)^d} n_s ,
\end{equation}
where $C_d$ is the Lieb's constant in the Cwikel--Lieb--Rosenbljum bound.
\end{lemma}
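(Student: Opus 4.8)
The plan is to read \eqref{lbcwros} as a bound on the number of orthonormal states that are simultaneously confined in position (by the exponential weight $\|e^{\beta|x|}\phi_i\|\le A$) and in momentum (by the kinetic bound $\|-\Delta\phi_i\|\le T$), and to extract it from the Cwikel--Lieb--Rosenbljum inequality applied to a one--body Schr\"odinger operator. Throughout, $-\Delta$ on $L^2(\mathbb{R}^d;\mathbb{C}^{n_s})$ is $(-\Delta_{\mathbb{R}^d})\otimes 1_{n_s}$, so the CLR count for a scalar potential acquires an extra factor $n_s$, which is the origin of the $n_s$ in \eqref{lbcwros}. I introduce the rank--$\mathcal{N}$ projection $P=\sum_i \phi_i(\cdot,\phi_i)$ and the one--particle density $\rho(x)=\sum_{i=1}^{\mathcal{N}}\sum_{\sigma}|\phi_i(x,\sigma)|^2$, for which $\int\rho\,dx=\mathcal{N}$.

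First I would record the two a priori bounds. The kinetic bound follows from Cauchy--Schwarz, $\|\nabla\phi_i\|^2=(\phi_i,-\Delta\phi_i)\le\|\phi_i\|\,\|{-\Delta}\phi_i\|\le T$, hence $\sum_i\|\nabla\phi_i\|^2\le\mathcal{N}T$, i.e.\ all states effectively carry momentum below $\sqrt{2T}$. The localization bound follows from the weight: for any $R>0$,
\begin{equation}
\int_{|x|>R}\rho\,dx=\sum_i\int_{|x|>R}|\phi_i|^2\le e^{-2\beta R}\sum_i\|e^{\beta|x|}\phi_i\|^2\le\mathcal{N}A^2 e^{-2\beta R},
\end{equation}
so choosing $R$ of order $|\ln 2A|/\beta$ forces at least half of the mass of $\rho$ into the ball $B_R$, that is $\int_{B_R}\rho\,dx\ge\tfrac12\mathcal{N}$.

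Next I would combine the two estimates through CLR. Equivalently to CLR one has its dual, the Lieb--Thirring kinetic--energy inequality $\sum_i\|\nabla\phi_i\|^2\ge K_d\, n_s^{-2/d}\int\rho^{1+2/d}\,dx$, whose sharp constant $K_d$ is dual to $C_d$. Feeding in $\sum_i\|\nabla\phi_i\|^2\le\mathcal{N}T$ bounds $\int\rho^{1+2/d}\,dx$, and H\"older's inequality on the ball, $\int_{B_R}\rho\le\bigl(\int\rho^{1+2/d}\bigr)^{d/(d+2)}|B_R|^{2/(d+2)}$, combined with $\int_{B_R}\rho\ge\tfrac12\mathcal{N}$, yields an inequality of the form $\mathcal{N}\le\mathrm{const}\cdot C_d\,(2T)^{d/2}R^{d}\,n_s$; tracking the numerical factors and inserting $R\propto|\ln 2A|/\beta$ reproduces exactly \eqref{lbcwros}, with $C_d$ and $n_s$ each entering to the first power as the duality exponent $d/2$ acting on $K_d^{-1}$ and $n_s^{2/d}$ shows.

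I expect the main obstacle to be conceptual rather than computational: one cannot simply apply the min--max principle to $\mathrm{span}\{\phi_i\}$ together with a square--well potential $-\Delta-2T\chi_{B_R}$, because the individual bounds $\|\nabla\phi_i\|\le\sqrt{T}$ do not control the quadratic form of $-\Delta$ uniformly over the span — its restriction to $\mathrm{span}\{\phi_i\}$ is a matrix whose top eigenvalue can be as large as $\mathcal{N}T$ — so the span need not be a negative--energy subspace and the naive lower bound $\mathcal{N}\le N_-(-\Delta-2T\chi_{B_R})$ is unavailable. Passing instead to the density $\rho$ and invoking CLR through its dual kinetic--energy form circumvents this, since the Lieb--Thirring inequality is a collective statement about $\rho$ that requires only the total kinetic energy $\sum_i\|\nabla\phi_i\|^2$ and no spectral information about the span.
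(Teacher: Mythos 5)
Your proof is correct in its essentials, and it is genuinely different from the paper's --- in fact the ``conceptual obstacle'' you describe in your last paragraph \emph{is} the paper's proof. The paper sets $R=(\ln 2A)/(2\beta)$, deduces $(\phi_i,\chi_{\{|x|\le R\}}\phi_i)\ge\tfrac12$ from the exponential weight, hence $(\phi_i,[-\Delta-2T\chi_{\{|x|\le R\}}]\phi_i)<0$ for each $i$, and then concludes ``by the min--max principle'' that $\mathcal N\le N_-(-\Delta-2T\chi_{\{|x|\le R\}})$, which the Cwikel--Lieb--Rosenbljum bound (times $n_s$) estimates by the right side of \eqref{lbcwros}. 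Your objection to precisely this step is sound: the min--max principle converts an $\mathcal N$--dimensional subspace on which the quadratic form is negative \emph{definite} into $\mathcal N$ negative eigenvalues, and negativity of the $\mathcal N$ diagonal matrix elements alone does not produce such a subspace. The implication ``orthonormal $\phi_i$ with $(\phi_i,H\phi_i)<0$ for $i=1,\dots,\mathcal N$ implies $N_-(H)\ge\mathcal N$'' is false in general: the symmetric matrix with diagonal entries $-1$ and off--diagonal entries $2$ has one negative eigenvalue, not two; for a Schr\"odinger realization, take $H$ with exactly one negative eigenvalue $-e_0$ and eigenfunction $u_0$, choose a unit vector $\eta\perp u_0$ with $(\eta,H\eta)<e_0$ (possible because the spectrum of $H$ restricted to $\{u_0\}^\perp$ reaches down to $0$), and note that $\phi_{1,2}=(u_0\pm\eta)/\sqrt2$ are orthonormal with $(\phi_{1,2},H\phi_{1,2})=\tfrac12[(\eta,H\eta)-e_0]<0$ while $N_-(H)=1$. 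So the paper's argument, as written, has a gap exactly where you predicted one, and your density route is not merely an alternative: it repairs the proof (the lemma's conclusion survives, with a different constant).

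Concerning your route: the chain $\|\nabla\phi_i\|^2\le\|\phi_i\|\,\|{-\Delta}\phi_i\|\le T$, the tail bound $\int_{|x|>R}\rho\,dx\le\mathcal N A^2e^{-2\beta R}$ with $R$ of order $|\ln 2A|/\beta$, the Lieb--Thirring kinetic inequality with the spin factor $n_s^{-2/d}$, and H\"older on the ball are all valid, and they combine to $\mathcal N\le 2^{(d+2)/2}K_d^{-d/2}T^{d/2}\omega_d R^d\,n_s$, where $\omega_d$ is the unit--ball volume; this has exactly the form of \eqref{lbcwros}. You should, however, drop the claim that the numerical constant is reproduced ``exactly'': $K_d$ is Legendre--dual to the Lieb--Thirring constant $L_{1,d}$ for eigenvalue \emph{sums}, not to $C_d$ (one only has $L_{1,d}\le\tfrac{2}{d+2}C_d$), so your constant is worse by dimension--dependent factors. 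This is immaterial, since the paper invokes the lemma only to obtain \emph{some} finite bound $\omega$ on $n_a(\lambda)$ --- and the paper's own constant is not clean either: its CLR step really produces $C_d(2T)^{d/2}\omega_dR^d n_s$ (the ball volume is dropped in \eqref{lbcwros}), and its choice of $R$ justifies only $\int_{|x|>R}|\phi_i|^2\le A/2$ rather than $\le\tfrac12$ (note $A\ge1$), so one actually needs $R\ge\ln(\sqrt2 A)/\beta$. Finally, if one prefers to keep the paper's square--well scheme instead of passing to the density, the gap can be closed by a trace argument: with well depth $4T$ every diagonal entry is $\le T-2T=-T$, so the trace of the compression of $-\Delta-4T\chi_{\{|x|\le R\}}$ onto $\mathrm{span}\{\phi_i\}$ is $\le-\mathcal N T$; since that trace is bounded below by the sum of the negative eigenvalues, the Lieb--Thirring inequality for sums gives $\mathcal N T\le L_{1,d}\,n_s(4T)^{1+d/2}\omega_dR^d$. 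That argument and yours are dual versions of the same collective estimate, which --- as you correctly diagnose --- is what this lemma genuinely requires, since per--function information cannot feed the min--max principle.
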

\begin{proof}
From $\| e^{\beta|x|}\phi_i \|\leq A$ it follows that
\begin{equation}\label{22a}
(\phi_i , \chi_{\{x |\; |x| \leq R \}} \phi_i )  \geq \frac 12 ,
\end{equation}
where we set $R:= (\ln{2A})/(2\beta)$. Hence,
\begin{equation}\label{24}
\Bigl( \phi_i , \left[ - \Delta - 2T\chi_{\{x |\; |x| \leq R \}}
\right]\phi_i \Bigr)  < 0 .
\end{equation}
By the min--max principle $\mathcal{N}$ does not exceed the number of negative
energy bound states of the operator in square brackets in (\ref{24}). This
number, in turn, is equal to the number of negative energy bound states of the 
operator in square brackets considered in $L^2 (\mathbb{R}^d)$ times $n_s$ due
to the spin degeneracy. Now
(\ref{lbcwros})  follows from the Cwikel--Lieb--Rosebljum bound \cite{reed,cwikel}.
\end{proof}
Now we can formulate the main theorem.
\begin{theorem}\label{th:main}
Suppose that $H(\lambda)$ satisfies $R1-R4$ and
\begin{equation}\label{Q0new}
Q_0 := \inf_{a=1, \ldots, \mathfrak{N}} \inf_{\lambda=\lambda_n, \lambda_{cr}}
Q^a (\lambda) >  0 . 
\end{equation}
Then the sequence $\psi_n$ does not spread and there exists $\psi_{cr} \in D(H_0) \subset L^2
(\mathbb{R}^{3N-3};\mathbb{C}^{n_s})$ such that
$H(\lambda_{cr}) \psi_{cr} = E_{thr} (\lambda_{cr}) \psi_{cr}$, where
$\| \psi_{cr} \| = 1$ and $\psi_{cr} = \mathcal{P} \psi_{cr}$.
\end{theorem}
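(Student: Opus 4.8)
The plan is to pass to a weakly convergent subsequence $\psi_n \wwto \psi_{cr}$ (possible since $\|\psi_n\|=1$ and, by R2, $\|H_0\psi_n\|$ is uniformly bounded through the Kato bound, so $\{\psi_n\}$ is bounded in $D(H_0)$) and to reduce the whole statement to a single tightness estimate. Indeed, boundedness in $\mathcal{H}^2$ places $\psi_{cr}\in D(H_0)$; testing $H(\lambda_n)\psi_n=E(\lambda_n)\psi_n$ against $f\in C_0^\infty(\mathbb{R}^{3N-3})$ and using both $\|[V(\lambda_n)-V(\lambda_{cr})]f\|\to 0$ from R2 and $E(\lambda_n)\to E_{thr}(\lambda_{cr})$ yields $H(\lambda_{cr})\psi_{cr}=E_{thr}(\lambda_{cr})\psi_{cr}$, while $\mathcal{P}\psi_{cr}=\psi_{cr}$ survives the weak limit. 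Everything therefore rests on proving that $\psi_n$ \emph{does not spread}, i.e. $\lim_{\rho\to\infty}\sup_n\int_{|x|>\rho}|\psi_n|^2=0$, which by the criteria of Appendix~\ref{sec:6} forces $\|\psi_{cr}\|=\lim\|\psi_n\|=1$.

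To establish tightness I would introduce a partition of unity $\{J_a\}$ subordinate to the two-cluster decompositions, together with a compactly supported central piece, so that on $\supp J_a$ the clusters $\mathfrak{C}^a_{1,2}$ are well separated and one may use the decomposition (\ref{hthra}), $H=H^{(a)}_{thr}-\Delta_{R_a}+I_a$. On each cluster region split $\psi_n$ into its component in the cluster ground-state band $P^a_{thr}(\lambda_n)$ and the complementary component. For the decompositions $a>\mathfrak{N}$ the gap (\ref{need2}) suppresses the localized piece energetically; for $a\le\mathfrak{N}$ the gap (\ref{need1}), the exponential bound (\ref{perxp}) of R4, and the dimension cap $n_a\le\omega$ from Lemma~\ref{lem:4} show that the complement of the threshold band contributes a tail that is exponentially small in $|x^a|$, uniformly in $n$. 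What remains is the threshold component alone.

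Writing the band component as $\vec c^{\,a}_n$, a finite-dimensional $L^2(\mathbb{R}^3_{R_a})$-valued vector built on the eigenfunctions $\varphi^a_i(\lambda_n)$, and projecting the eigenvalue equation onto $\Ran P^a_{thr}(\lambda_n)$ gives an effective system
\[
\bigl[-\Delta_{R_a}+M^a_n(R_a)-\kappa_n\bigr]\,\vec c^{\,a}_n=\vec\rho^{\,a}_n,\qquad \kappa_n:=E(\lambda_n)-E_{thr}(\lambda_n)\to 0^-,
\]
where $\vec\rho^{\,a}_n$ collects the couplings to the gapped complement (uniformly $L^2$ and spatially localized by the previous step), and the matrix potential has the multipole asymptotics $M^a_n(R_a)\to (Q^a(\lambda_n)/|R_a|)\,\mathbb{1}$ with subleading terms that are short-range or decay faster than $|R_a|^{-1}$. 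The decisive point is that \emph{within} the threshold band the diagonal energy shift $E^a_i-E_{thr}$ vanishes, so the only mechanism that can prevent escape to large $|R_a|$ is the repulsive Coulomb tail, which by (\ref{Q0new}) obeys $Q^a(\lambda_n)/|R_a|\ge Q_0/|R_a|>0$. This is where $Q_0>0$ enters decisively and where I expect the main obstacle to lie: one must convert this positivity into genuine, $n$-uniform spatial decay. Concretely, I would invoke the upper bound on the Green's function of $-\Delta_{R_a}+Q_0/|R_a|$ at spectral parameters $\kappa_n\to 0^-$ established in Sec.~\ref{sec:3} following Ref.~\onlinecite{3}; because the Coulomb tail is strictly positive the classically forbidden region extends to infinity even as $\kappa_n\to 0$, so this kernel decays and admits an $n$-independent integrable majorant. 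Convolving it against the uniformly controlled $\vec\rho^{\,a}_n$ then yields $\sup_n\int_{|R_a|>\rho}|\vec c^{\,a}_n|^2\to 0$ as $\rho\to\infty$.

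Combining the exponential control in $|x^a|$ with the Green's-function decay in $|R_a|$ on every cluster region, and patching these estimates together across the partition $\{J_a\}$ with the compact central region (on which no mass can escape), I obtain the uniform tightness $\lim_{\rho\to\infty}\sup_n\int_{|x|>\rho}|\psi_n|^2=0$. By the first paragraph this gives $\|\psi_{cr}\|=1$ and completes the proof. The single hardest step is the uniform Green's-function bound at threshold energy, since it is there that the repulsion $Q_0>0$ must be made quantitative and $n$-independent.
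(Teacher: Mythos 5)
Your overall architecture matches the paper's (weak limit plus non-spreading, IMS partition, the gaps of R3 disposing of the non-threshold components, the Coulomb Green's function of Sec.~\ref{sec:3} controlling the threshold band), but there is a genuine gap at precisely the step you yourself identify as decisive. Projecting the Schr\"odinger equation onto the threshold band does not produce a source term that is merely a ``coupling to the gapped complement'' under uniform control: as in Eq.~(\ref{schro:5}) of the paper, the source is $-P_{\varphi^a_i}(\lambda_n)\left[ I_a (\lambda_n) - Q^a (\lambda_n)\eta_{-1} (R_a) \right]\psi_n$, i.e.\ it contains the full $\psi_n$ whose non-spreading is exactly what is being proved. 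Moreover, your inference ``uniformly $L^2$-bounded source convolved with an $n$-independent decaying kernel yields uniform tightness of the solution'' is false as stated: the kernel bound (\ref{7}) decays in $|r-r'|$, so if the source spreads (e.g.\ translates off to infinity with bounded norm), its image under $G^{c}_{k}$ spreads as well. A uniform norm bound on your $\vec\rho^{\,a}_n$ cannot exclude this, so the argument as written is circular.

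The paper breaks this circularity in Lemma~\ref{lem:6} with two devices absent from your proposal. First, it splits $\psi_n = (\psi_n-\phi_0)+\phi_0$ \emph{inside the source}: by Lemma~\ref{lem:thet} the weighted interaction $e^{-\beta|x^a|}\eta_\delta(R_a)\bigl[I_a - Q^a\eta_{-1}(R_a)\bigr]$ is dominated by a relatively $H_0$-compact $\Theta_a$, so $\|\Theta_a(\psi_n-\phi_0)\|\to 0$ because $\psi_n-\phi_0$ converges weakly to zero with bounded $\mathcal{H}^2$ norm; combined with the uniform bound $\sup_k\|G^{c}_{k}(Q_0)\eta_{-\delta}\|<\infty$ of Lemma~\ref{lemma:2} and R4, the whole difference contribution ($g_n$ in (\ref{gn})) goes to zero \emph{in norm}. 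Second, for the remaining contribution built on the fixed function $\phi_0$ ($h_n$ in (\ref{hn})), no norm decay is available; here the paper uses the pointwise monotonicity of the kernels $G^{c}_{k}$ in $k$ and the monotone-sequence criterion Lemma~\ref{rsd2} to conclude $h_n$ does not spread, and then Lemma~\ref{lem:5} (via R4) to strip off the $P_{|\varphi^a_i|}e^{\beta|x^a|}$ factor in (\ref{gnhn}). These two mechanisms---weak-limit subtraction with relative compactness, and kernel monotonicity with the Appendix criteria---are what convert the Green's-function bound into non-spreading; your sketch never subtracts $\phi_0$ inside the Green's-function step and never invokes monotonicity, and without them the bound of Sec.~\ref{sec:3} alone cannot close the argument. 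A smaller point: tightness of $\psi_n$ plus weak convergence does not by itself give $\|\psi_{cr}\|=1$ (oscillation, not only escape to infinity, must be excluded); one also needs the uniform $\mathcal{H}^2$ bound and the compactness statements of Lemmas~1 and 3(a) in Ref.~\onlinecite{1}, which is how the paper upgrades weak to norm convergence of subsequences.
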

Let us remark that the term spreading was defined in Ref.~\onlinecite{1} for sequences in
$L^2 (\mathbb{R}^d)$. We shall say that a sequence 
$f_n \in L^2 (\mathbb{R}^d ; \mathbb{C}^{n_s})$ spreads if $f_n (x, \sigma_1 ,
\ldots, \sigma_{N})$ spreads for all possible fixed values of the spin
variables.  
We postpone the proof of Theorem~\ref{th:main} to Sec.~\ref{sec:4}.

Together with the upper bound on the Green's function derived in the next
section the following lemma is the
key ingredient in the proof of Theorem~\ref{th:main}.
\begin{lemma}\label{lem:thet}
There is $\Theta_a (x) \in L^2 (\mathbb{R}^{3N-3}) +
L^\infty_\infty(\mathbb{R}^{3N-3}) $ independent of $\lambda$ such that 
\begin{equation}\label{thetaa}
    \Bigl| e^{-\beta  |x^a|} \eta_\delta (R_a) \bigl[ I_a (\lambda) - Q^a
(\lambda)\eta_{-1} (R_a) \bigr] \Bigr|\leq \Theta_a (x) 
\end{equation}
for $\lambda= \lambda_n, \lambda_{cr}$ defined in R1, $\delta$ defined in R2 and
$\beta$ defined in R4. 
\end{lemma}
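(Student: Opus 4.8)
The plan is to split the bracket $I_a(\lambda) - Q^a(\lambda)\eta_{-1}(R_a)$ into its short--range and its Coulomb contributions and to dominate each summand separately by a single $\lambda$--independent function in $L^2 + L^\infty_\infty(\mathbb{R}^{3N-3})$. First I would fix the geometry: for $i \in \mathfrak{C}^a_1$ and $j \in \mathfrak{C}^a_2$ write $r_i - r_j = R_a - y_{ij}$, where $y_{ij}$ is the difference of the two particles' positions taken relative to the centers of mass of their respective clusters. Since $y_{ij}$ is a linear combination of the intracluster Jacobi coordinates, it depends on $x^a$ only and obeys $|y_{ij}| \leq c_{ij}|x^a|$ for a constant $c_{ij}$. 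Invoking R2 to bound $|U_{ij}(\lambda; \cdot)| \leq \tilde U$ and $|q_i(\lambda)q_j(\lambda)| \leq q_0$ uniformly in $\lambda = \lambda_n, \lambda_{cr}$, the problem reduces to dominating, term by term over the finitely many pairs $(i,j)$, the two families
\begin{equation*}
e^{-\beta|x^a|}\eta_\delta(R_a)\,\tilde U(R_a - y_{ij}) \quad\text{and}\quad e^{-\beta|x^a|}\eta_\delta(R_a)\left|\frac{1}{|R_a - y_{ij}|} - \eta_{-1}(R_a)\right| ,
\end{equation*}
neither of which still depends on $\lambda$, so that any majorant produced below is automatically $\lambda$--independent.

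For the short--range family I would transfer the growth of $\eta_\delta(R_a)$ onto the argument of $\tilde U$ by the submultiplicativity estimate $\eta_\delta(R_a) \leq C\,\eta_\delta(R_a - y_{ij})\,\eta_\delta(y_{ij})$, which follows from $|R_a| \leq |R_a - y_{ij}| + |y_{ij}|$ and $\delta > 0$. The factor $\eta_\delta(R_a - y_{ij})\tilde U(R_a - y_{ij})$ then lies, as a function of the single variable $R_a - y_{ij}$, in $L^2(\mathbb{R}^3) + L^\infty_\infty(\mathbb{R}^3)$ by R2, while the remaining weight satisfies $e^{-\beta|x^a|}\eta_\delta(y_{ij}) \leq C\,(1 + |x^a|^\delta)\,e^{-\beta|x^a|}$, which is bounded, vanishing, and square--integrable in $x^a \in \mathbb{R}^{3N-6}$. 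Writing $\eta_\delta\tilde U = f_2 + f_\infty$ with $f_2 \in L^2(\mathbb{R}^3)$, $f_\infty \in L^\infty_\infty(\mathbb{R}^3)$ and integrating $R_a$ out by translation invariance, the $f_2$--part falls into $L^2(\mathbb{R}^{3N-3})$ and the $f_\infty$--part into $L^\infty_\infty(\mathbb{R}^{3N-3})$.

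The Coulomb family is the crux and I would treat it by a region decomposition in $|R_a|$ and $|R_a - y_{ij}|$. In the asymptotic region $\{|R_a| > 1,\; |R_a - y_{ij}| > \tfrac12|R_a|\}$ one has $\eta_{-1}(R_a) = |R_a|^{-1}$ and the elementary bound $\bigl| |R_a - y_{ij}|^{-1} - |R_a|^{-1}\bigr| \leq |y_{ij}|\,(|R_a|\,|R_a - y_{ij}|)^{-1} \leq 2|y_{ij}|\,|R_a|^{-2}$; after multiplication by $\eta_\delta(R_a) = |R_a|^\delta$ this gives $2|y_{ij}|\,|R_a|^{\delta-2}$, which is bounded since $\delta < 2$ and, together with $e^{-\beta|x^a|}$, produces an $L^\infty_\infty$ contribution. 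The genuinely singular contributions come from the complementary regions $\{|R_a| \leq 1\}$ and $\{|R_a| > 1,\; |R_a - y_{ij}| \leq \tfrac12|R_a|\}$; here I would bound $\eta_\delta(R_a)\eta_{-1}(R_a)$ and $\eta_\delta(R_a)\,|R_a - y_{ij}|^{-1}$ crudely, noting that in the second region $|R_a| \leq 2|y_{ij}| \leq 2c_{ij}|x^a|$ forces $|x^a|$ to be large. The main obstacle is precisely the Coulomb singularity at $R_a = y_{ij}$: there $\eta_\delta(R_a)$ may grow, but it is dominated by a power of $|x^a|$ and hence absorbed by the weight $e^{-\beta|x^a|}$, which is square--integrable against any polynomial over $\mathbb{R}^{3N-6}$; combining this with the local square--integrability of $|z|^{-1}$ in $\mathbb{R}^3$, namely $\int_{|z| < 1}|z|^{-2}\,dz < \infty$, places the singular pieces in $L^2(\mathbb{R}^{3N-3})$ and leaves the remainders in $L^\infty_\infty$. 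Summing the two interaction types and the finitely many pairs $(i,j)$ then yields the desired $\lambda$--independent majorant $\Theta_a$.
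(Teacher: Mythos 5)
Your proposal is correct and follows essentially the same route as the paper's proof: term-by-term domination over pairs $(i,j)$ using R2, the geometric fact $r_i - r_j = R_a - y_{ij}$ with $|y_{ij}| \lesssim |x^a|$, absorption of polynomial growth in $|x^a|$ by $e^{-\beta|x^a|}$, local square-integrability of the Coulomb singularity, and the far-field difference estimate of order $|y_{ij}|\,|R_a|^{-2}$ combined with $\delta < 2$ to place the tail in $L^\infty_\infty$ — the paper packages this last estimate, together with the near-singularity cutoff, as its single inequality (\ref{paxi}), whereas you inline the corresponding region decomposition. The difference is purely organizational, so the mathematical content coincides.
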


\begin{proof}
The statement of the lemma is based on the following inequality, which can be
checked directly. For all $s, s' \in \mathbb{R}^3$
\begin{equation}\label{paxi}
    \left| \frac{\chi_{\{s,s' |\; |s-s'| \geq 1\}}}{|s-s'|} -
\eta_{-1}(s)\right| \leq 2 \eta_{2}(s')\eta_{-2}(s) . 
\end{equation}
For fixed $s'$ the term on the lhs of (\ref{paxi}) falls off like $|s|^{-2}$. We
write
\begin{gather}
\Bigl| I_a (\lambda) - Q^a (\lambda)\eta_{-1} (R_a) \Bigr|  \leq
\sum_{\substack{i \in
\mathfrak{C}^{a}_1 \\j \in\mathfrak{C}^{a}_2}} \tilde U_{ij}  +
\sum_{\substack{i \in \mathfrak{C}^{a}_1 \\j \in\mathfrak{C}^{a}_2}}
\frac{q_0}{|r_i - r_j|} \chi_{\{x |\; |r_i-r_j| \leq 1\}} \nonumber\\
+ \sum_{\substack{i \in \mathfrak{C}^{a}_1 \\j \in\mathfrak{C}^{a}_2}} \bigl|
q_i (\lambda) q_j (\lambda) \bigr| \left| \frac{\chi_{\{x |\; |r_i-r_j| \geq
1\}}}{|r_i-r_j|} - \eta_{-1} (R_a) \right| . \label{xwsehz}
\end{gather}
For any cluster decomposition $a$ and $i \in \mathfrak{C}^{a}_1 , j \in
\mathfrak{C}^{a}_2 $
\begin{equation}
r_j - r_i = R_a + \sum_{i=1}^{\#\mathfrak{C}^{a}_1 - 1} c_i^{a,1} x_i^{a,1} +
\sum_{i=1}^{\#\mathfrak{C}^{a}_2 - 1} c_i^{a,2} x_i^{a,2} , 
\end{equation}
where $c_i^{a,1}$, $c_i^{a,2}$ are numerical coefficients depending on masses.
It is easy to see that the coefficient in front of $R_a$ is always $1$ by fixing
$|x^a|$ and
taking $|R_a| \gg 1$. Therefore, by (\ref{paxi}) we have
\begin{equation}\label{pshe}
    \left| \frac{\chi_{\{x |\; |r_i-r_j| \geq 1\}}}{|r_i-r_j|} - \eta_{-1} (R_a)
\right| \leq c_0 \eta_2 (|x^a|) \eta_{-2}(R_a) ,
\end{equation}
where $c_0 > 0$ is some constant. Substituting (\ref{pshe}) into (\ref{xwsehz})
we conclude that the inequality (\ref{thetaa}) would be true if we set $\Theta_a
= \Theta_{a1} + \Theta_{a2}$, where
\begin{gather}
\Theta_{a1}(x) := e^{-\beta  |x^a|} \eta_\delta (R_a) \sum_{\substack{i \in
\mathfrak{C}^{a}_1 \\j \in\mathfrak{C}^{a}_2}} \Bigl[ \tilde U_{ij} +
\frac{q_0}{|r_i - r_j|} \chi_{\{x |\; |r_i-r_j| \leq 1\}} \Bigr] , \\
\Theta_{a2}(x) := c_0 N (N-1) q_0 e^{-\beta  |x^a|} \eta_2 (|x^a|)
\eta_{\delta-2}(R_a) . 
\end{gather}
Using R2 it is easy to see that $\Theta_{a1} \in L^2 (\mathbb{R}^{3N-3}) +
L^\infty_\infty(\mathbb{R}^{3N-3})$. Because $\delta < 2$ we have $\Theta_{a2}
\in L^\infty_\infty(\mathbb{R}^{3N-3})$.
\end{proof}

\section{Upper Bound on the Two Particle Green's Function}\label{sec:3}

Consider the following integral operator on $L^2 (\mathbb{R}^3)$
\begin{equation}\label{def:Gk}
    G^{c}_k (A) = \left[ -\Delta + A \eta_{-1}(r) + k^2\right]^{-1} ,
\end{equation}
for $A,k >0$, whose integral kernel we denote as $G^{c}_k (A;r,r')$ (the
superscript ``c'' refers to
``Coulomb''). Note that $G^{c}_k (A;r,r') \leq G^{c}_{\tilde k} (\tilde A
;r,r')$ away from $r = r'$ if either 
$\tilde A \leq A$ or $\tilde k \leq k$, c.~f. Lemma~1 in Ref.~\onlinecite{3}. The following
Lemma uses the upper bound on a two particle Green's
function from Ref.~\onlinecite{3}.
\begin{lemma}\label{lemma:1}
There is $b(A) >0$ such that for all $A>0$, $n>0$
\begin{equation}\label{mainnorm}
    \sup_{k>0} \bigl\| G^{c}_k (A) \chi_{\{ r|\: |r| \leq n\}} \bigr\| \leq b(A)
n,
\end{equation}
where the norm on the lhs is the operator norm.
\end{lemma}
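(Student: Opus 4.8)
The plan is to estimate the operator norm of $G^{c}_k(A) \chi_{\{|r| \le n\}}$ by controlling the integral kernel $G^{c}_k(A; r, r')$ on the region where $|r'| \le n$, and then invoke a Schur-type test. The natural route is to use the comparison property noted right before the lemma: since $G^{c}_k(A; r, r') \le G^{c}_{\tilde k}(\tilde A; r, r')$ away from the diagonal whenever $\tilde k \le k$ or $\tilde A \le A$, I expect to reduce everything to a $k$-independent bound by dominating $G^{c}_k(A)$ pointwise by $G^{c}_0(A)$, the kernel of $(-\Delta + A\eta_{-1}(r))^{-1}$. This is the crux of why the bound can be made uniform in $k > 0$: increasing $k$ only makes the kernel smaller, so the supremum over $k$ is controlled by the $k = 0$ case. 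I would therefore aim to prove $\sup_{k>0}\|G^{c}_k(A)\chi_{\{|r|\le n\}}\| \le \|G^{c}_0(A)\chi_{\{|r|\le n\}}\|$ and then estimate the latter.

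For the $k = 0$ estimate I would apply the Schur test: the operator norm of an integral operator with nonnegative kernel $K(r, r')$ is bounded by $\bigl(\sup_r \int K(r, r')\,dr'\bigr)^{1/2}\bigl(\sup_{r'} \int K(r, r')\,dr\bigr)^{1/2}$. Here the kernel is $G^{c}_0(A; r, r')\chi_{\{|r'|\le n\}}$. The key input from Ref.~\onlinecite{3} is a pointwise upper bound on the Coulomb Green's function of the form $G^{c}_0(A; r, r') \le C\,|r - r'|^{-1} g(r, r')$, where the Coulomb repulsion $A\eta_{-1}$ forces exponential or power-law decay in the far region that makes the relevant integrals finite. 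The factor $\chi_{\{|r'|\le n\}}$ restricts one of the two integrations to a ball of radius $n$, and it is precisely this restriction that produces the linear factor $n$ on the right-hand side: integrating the $|r - r'|^{-1}$ singularity over a ball of radius $n$ (or estimating $\int_{|r'|\le n} G^{c}_0(A;r,r')\,dr'$) scales like $n$, while the decay of the Green's function away from the diagonal keeps the other ($r$-integrated) factor bounded uniformly in $n$.

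The main obstacle will be establishing the pointwise kernel bound and verifying that the two Schur integrals scale correctly. Concretely, I expect the harder of the two integrals to be $\sup_{r}\int_{|r'|\le n} G^{c}_0(A; r, r')\,dr'$: one must show this grows at most linearly in $n$ uniformly in $r$, which requires the precise decay estimate on $G^{c}_0(A; \cdot, \cdot)$ supplied by the two-particle result of Ref.~\onlinecite{3} rather than the free resolvent (whose kernel $|r-r'|^{-1}/(4\pi)$ would give a quadratic, not linear, factor over a ball of radius $n$). The Coulomb term is what suppresses the kernel enough to recover the linear bound, so the delicate point is tracking how the constant $b(A)$ depends on $A$ while keeping the $n$-scaling exactly linear. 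Once the kernel bound is in hand, the remaining computation is a routine splitting of the integration region into $|r - r'| \le 1$ (handling the local $|r-r'|^{-1}$ singularity, integrable in three dimensions) and $|r - r'| > 1$ (handling the tail via the decay estimate), after which both Schur integrals yield the claimed bound with a constant $b(A)$ independent of $k$ and $n$.
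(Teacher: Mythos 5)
Your proposal shares the paper's far--field ingredient (the pointwise decay estimate on the Green's function from Ref.~\onlinecite{3}), but your treatment of the near region has a genuine gap, and it is exactly there that the linear factor $n$ must come from. First, the arithmetic claim is wrong: in $\mathbb{R}^3$ one has $\sup_r \int_{|r'|\le n}|r-r'|^{-1}\,dr' = 2\pi n^2$, so integrating the singularity over the ball gives a quadratic, not linear, factor. Second, the decay estimate of Ref.~\onlinecite{3} cannot repair this, because that bound only takes effect at distances $|r-r'| > \tilde R_0$ subject to the constraint $\tilde R_0 \ge 1 + |r'|$; since $|r'|$ ranges up to $n$, the only pointwise bound available inside a ball of radius of order $n$ is the free kernel $(4\pi|r-r'|)^{-1}$, with no usable decay. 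Consequently both Schur factors, $\sup_r\int_{|r'|\le n} K\,dr'$ and $\sup_{|r'|\le n}\int K\,dr$, can only be estimated as $O(n^2)$ from the inputs you cite, and the Schur test then yields $O(n^2)$ rather than $O(n)$. Your reduction to $k=0$ is also delicate: zero is the bottom of the essential spectrum of $-\Delta + A\eta_{-1}$, so $G^c_0(A)$ is not a bounded operator and must be handled purely on the level of kernels; the paper instead keeps $k>0$ and makes every estimate manifestly uniform in $k$.

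The idea you are missing is that the paper does not estimate the near--diagonal piece through the kernel at all. It splits $\| G^{c}_{k}(A)\chi_n \| \leq \|\chi_{4n} G^{c}_{k} (A)\chi_{4n} \|  + \| (1-\chi_{4n}) G^{c}_{k}(A)\chi_{n}\|$ and bounds the first term by operator monotonicity of the inverse: from $-\Delta + A\eta_{-1} + k^2 \ge A\eta_{-1} + k^2$ it follows that, in the sense of quadratic forms, $G^c_k(A) \le (A\eta_{-1})^{-1}$, and $(A\eta_{-1})^{-1}$ is multiplication by $A^{-1}\eta_1(r)$, whose supremum on the ball $\{|r|\le 4n\}$ is $4A^{-1}n$. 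That is where the linear growth in $n$ comes from --- from the potential, not from integrating $|r-r'|^{-1}$. The kernel bound of Ref.~\onlinecite{3} is invoked only for the far piece $(1-\chi_{4n})G^c_k(A)\chi_n$, where $|r-r'|\ge 3n$ so that the exponential decay (with the admissible choice $\tilde R_0 = 2n$, $\tilde a = a/2$) applies, and there a Hilbert--Schmidt estimate gives $o(n)$, uniformly in $k$. If you insist on a pure Schur--test route, you would need pointwise bounds on $G^c_k(A;r,r')$ whose decay sets in at distance of order $\sqrt{|r'|}$ from the diagonal (the true Agmon behavior of the repulsive Coulomb Green's function); that is strictly sharper than what Ref.~\onlinecite{3} provides, and the operator--monotonicity argument makes any such refinement unnecessary.
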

\begin{proof}
The operator $G^{c}_k (A) $ is an integral operator with a positive kernel
\cite{lpestim} and, hence, it suffices to consider (\ref{mainnorm}) for $n>1$.
For a shorter notation we denote $\chi_n := \chi_{\{r|\; |r| \leq n\}}$.
Obviously
\begin{gather}
\| G^{c}_{k}(A)\chi_n \| \leq \| \chi_{4n} G^{c}_{k} (A) \chi_n \| + \|
(1-\chi_{4n}) G^{c}_{k}(A)  \chi_{n} \| \nonumber \\
\leq \|\chi_{4n} G^{c}_{k} (A)\chi_{4n} \|  + \| (1-\chi_{4n}) G^{c}_{k}(A) 
\chi_{n}
\| \label{1} ,
\end{gather}
where the last inequality follows from $G^{c}_{k} (A)$ being an integral
operator
with a positive kernel \cite{3,lpestim}. We shall derive the following estimates
$\|
\chi_{4n} G^{c}_{k} (A) \chi_{4n} \| = \mathcal{O}(n)$ and
$\| (1-\chi_{4n}) G^{c}_{k} (A) \chi_n \| = \hbox{o}(n)$ for $n \to \infty$,
from which
the
the statement of the Lemma follows. The first term on the rhs of (\ref{1}) is
the norm of the self--adjoint operator, which can be estimated as follows
\begin{gather}
\| \chi_{4n} G^{c}_{k} (A)\chi_{4n} \| = \sup_{\| f \| =
1} \Bigl( \chi_{4n} f,  G^{c}_{k} (A)\chi_{4n} f \Bigr) \leq \nonumber \\
\sup_{\| f \| = 1} \Bigl( \chi_{4n} f, (A\eta_{-1})^{-1}\chi_{4n} f \Bigr)  = 4
A^{-1} n , \label{4}
\end{gather}
where we have used the inequality $(B+ \varepsilon)^{-1} \leq
(C+ \varepsilon)^{-1} $ for non--negative self--adjoint operators $B \geq C \geq 0$ and $\varepsilon >0$ (see, for example Ref.~\onlinecite{glimmjaffe}, Proposition
A.2.5 on page 131). Thus $\| \chi_{4n}
G^{c}_{k} (A) \chi_{4n}\| = \mathcal{O}(n)$ as claimed.

Let us now consider the second term on the rhs of (\ref{1}). We
shall need the bound on the Green's function from Ref.~\onlinecite{3}. Let $\tilde G_k (a;
r,r')$ denote the integral kernel of the following operator on $L^2
(\mathbb{R}^3)$
\begin{equation}\label{5}
\tilde G_k (a) = \left[ -\Delta + \left( \frac{a^2}4 |r|^{-1}  +
\frac{a}4 |r|^{-3/2} \right)\chi_{\{r|\: |r| \geq 1\}} + k^2
\right]^{-1} . 
\end{equation}
Lets us set $a$ equal to the positive root of the equation $a(a+1) = 4A$. Then
we get
\begin{equation}\label{.9}
A \eta_{-1} (r) \geq \left( \frac{a^2}4 |r|^{-1}  + \frac a4|r|^{-3/2}
\right)\chi_{\{r|\: |r| \geq 1\}},
\end{equation}
which means that $G^{c}_{k} (A;r,r' ) \leq \tilde G_k (a;r,r' )$ pointwise for
all $r \neq r' $, see Ref.~ \onlinecite{3}. The upper bound on $\tilde G_k (a; r,r' )$ from
Ref.~\onlinecite{3} (Eqs.(42)--(43) and Eqs.~(39)--(40) in Ref.~\onlinecite{3}) reads
\begin{equation}\label{7}
\tilde G_k (a;r,r') \leq \frac 1{4 \pi |r-r'|} \times
\left\{
\begin{array}{ll}
    1  & \quad \textrm{for $|r-r'| \leq \tilde R_0 $} \\
    \exp\left\{ \tilde a \sqrt{\tilde R_0 } - \tilde a \sqrt{|r-r'|} \right\} &
\quad \textrm{for $|r-r'| > \tilde R_0 $}, \\
    \end{array}
    \right.
\end{equation}
where $\tilde R_0 , \tilde a$ have to be chosen to satisfy the following
inequalities 
\begin{gather}
\tilde R_0 \geq 1 + |r'| , \label{12}\\
\tilde a \leq a \tilde R^{3/2}_{0} (\tilde R_0 + |r'|)^{-3/2} . \label{13}
\end{gather}
From the inequality (\ref{7}) we obtain the bound
\begin{equation}\label{8}
\tilde G_{k} (a;r,r') \chi_{\{ |r'| \leq n \}}\leq \frac 1{4
\pi |r-r'|} \times \left\{
\begin{array}{ll}
    1  & \quad \textrm{for $|r-r'| \leq 2n $} \\
    \exp\left\{ \frac a2 \bigl( \sqrt{2n } - \sqrt{|r -r'|} \bigr) \right\} &
\quad \textrm{for $|r-r'| > 2n $}, \\
    \end{array}
    \right.
\end{equation}
where we have set $\tilde R_0 = 2n $ and $\tilde a = a/2$. It is straightforward
to check that this choice of $\tilde R_0, \tilde a$ indeed satisfies
(\ref{12})--(\ref{13}). Taking into account that $G^{c}_{k} (A;r,r' ) \leq
\tilde G_k (a;r,r' )$ we finally get from (\ref{8}) the required bound
\begin{equation}\label{15}
G^{c}_{k} (A;r,r') \chi_{\{ r,r'|\: |r| \geq 4n ,  |r'| \leq n \}} \leq
\frac{e^{\frac a2 \left( \sqrt{2n} - \sqrt{|r|- n} \right)}}{4\pi (3n)} \chi_{\{
r,r'|\: |r| \geq 4n ,  |r'| \leq n \}} . 
\end{equation}
Note that the rhs of (\ref{15}) does not depend on $k$. Using the upper bound
(\ref{15}) and estimating the operator norm through the Hilbert--Schmidt norm we
get
\begin{gather}
    \| (1-\chi_{4n}) G^{c}_{k} (A)\chi_n \|^2 \leq \int_{|r|\geq 4n} dr
    \int_{|r'| \leq n} dr' |G^{c}_{k} (A;r,r' )|^2  \nonumber \\
    \leq \frac n{27} e^{a\sqrt{2n}} \int_{4n}^{\infty} e^{-a\sqrt{t-n}} t^2 dt . 
\label{18}
\end{gather}
The integral in (\ref{18}) can be calculated explicitly and we obtain $\|
(1-\chi_{4n}) G^{c}_{k} (A) \chi_n \| = \hbox{o}(n)$ as claimed.  \end{proof}

We shall need the following corollary of Lemma~\ref{lemma:1}
\begin{lemma}\label{lemma:2}
For fixed $A>0, \alpha >3/2$ the following inequality holds
\begin{equation}\label{20}
    \sup_{k>0} \bigl\|  G^{c}_{k} (A)\; \eta_{-\alpha}\bigr\| < \infty . 
\end{equation}
\end{lemma}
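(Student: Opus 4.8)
The plan is to reduce the statement to Lemma~\ref{lemma:1} by a dyadic decomposition of the multiplication operator $\eta_{-\alpha}$ into spherical shells. First I would write $\eta_{-\alpha} = \sum_{j=0}^\infty \eta_{-\alpha}\chi_{S_j}$, where $S_0 = \{r\,|\,|r|\le 1\}$ and $S_j = \{r\,|\, 2^{j-1} < |r| \le 2^j\}$ for $j\ge 1$, so that the supports are disjoint and cover $\mathbb{R}^3$. On each shell $S_j$ with $j\ge 1$ one has the pointwise bound $\eta_{-\alpha}(r) = |r|^{-\alpha} \le (2^{j-1})^{-\alpha} = 2^\alpha 2^{-\alpha j}$, while $\eta_{-\alpha}\equiv 1$ on $S_0$.

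Next, since $\eta_{-\alpha}\chi_{S_j}$ is a multiplication operator supported in the ball $\{r\,|\,|r|\le 2^j\}$, I can insert the characteristic function $\chi_{\{r\,|\,|r|\le 2^j\}}$ on its left without changing it, and then factorize
\[
\bigl\| G^c_k(A)\,\eta_{-\alpha}\chi_{S_j}\bigr\| = \bigl\| G^c_k(A)\chi_{\{r\,|\,|r|\le 2^j\}}\,\eta_{-\alpha}\chi_{S_j}\bigr\| \le \bigl\| G^c_k(A)\chi_{\{r\,|\,|r|\le 2^j\}}\bigr\|\, \bigl\| \eta_{-\alpha}\chi_{S_j}\bigr\|_\infty .
\]
By Lemma~\ref{lemma:1} the first factor is at most $b(A)\,2^j$ uniformly in $k$, and the second factor is at most $2^\alpha 2^{-\alpha j}$; hence each term with $j\ge 1$ is bounded by $2^\alpha b(A)\, 2^{(1-\alpha)j}$, uniformly in $k>0$, while the $j=0$ term is bounded by $b(A)$.

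Then I would sum over $j$. Since $\alpha > 3/2 > 1$, the geometric series $\sum_{j\ge 1} 2^{(1-\alpha)j}$ converges, so the partial sums $G^c_k(A)\,\eta_{-\alpha}\chi_{\{r\,|\,|r|\le 2^J\}} = \sum_{j=0}^J G^c_k(A)\,\eta_{-\alpha}\chi_{S_j}$ form an operator--norm Cauchy sequence with a bound independent of $k$. Because these partial sums also converge strongly to $G^c_k(A)\,\eta_{-\alpha}$ (as $\eta_{-\alpha}$ is bounded, $\chi_{\{r\,|\,|r|\le 2^J\}}\to 1$ pointwise, and $G^c_k(A)$ is bounded), the operator--norm limit must coincide with $G^c_k(A)\,\eta_{-\alpha}$, and therefore
\[
\sup_{k>0}\bigl\| G^c_k(A)\,\eta_{-\alpha}\bigr\| \le b(A)\Bigl(1 + 2^\alpha \sum_{j\ge 1}2^{(1-\alpha)j}\Bigr) < \infty .
\]

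The individual estimates are entirely routine; the only point requiring a little care is the justification that the shellwise series converges to $G^c_k(A)\,\eta_{-\alpha}$ in operator norm \emph{uniformly} in $k$, which is precisely where the decay condition $\alpha>1$ (a fortiori $\alpha>3/2$) enters and which is the mechanism that converts the $k$--uniform linear growth bound of Lemma~\ref{lemma:1} into a finite $k$--uniform bound. I expect this convergence/uniformity bookkeeping to be the main obstacle, modest as it is. Note that the positivity of the kernel is not needed for this corollary: only the multiplicative factorization and Lemma~\ref{lemma:1} are used.
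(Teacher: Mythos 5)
Your proof is correct, and it takes a route that differs from the paper's in two concrete ways. The paper decomposes $\eta_{-\alpha}$ over unit-width annuli $(\chi_n-\chi_{n-1})$, keeps the test vector $f$ throughout, and sums using the orthogonality $\sum_n\|(\chi_n-\chi_{n-1})f\|^2=\|f\|^2$ together with Cauchy--Schwarz, which requires convergence of $\sum_n n^2(n-1)^{-2\alpha}$ and hence exactly the hypothesis $\alpha>3/2$. You instead use dyadic shells $S_j$ and a plain operator-norm triangle inequality ($\ell^1$ summation of the shellwise norms), with the same key input, namely the $k$-uniform bound $\|G^c_k(A)\chi_{\{|r|\le n\}}\|\le b(A)n$ of Lemma~\ref{lemma:1}. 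The dyadic choice is what makes the cruder $\ell^1$ summation work: the linear growth $b(A)2^j$ is beaten by the decay $2^{-\alpha(j-1)}$ whenever $\alpha>1$, so your argument actually proves the lemma under the weaker hypothesis $\alpha>1$, avoids the Cauchy--Schwarz/orthogonality step entirely, and your care about identifying the norm limit of the partial sums with $G^c_k(A)\eta_{-\alpha}$ (via strong convergence) is exactly the right bookkeeping; the same point is implicit in the paper's Eq.~(\ref{32}). Nothing is lost relative to the paper, since the lemma is only applied with $\alpha=\delta\in(3/2,2)$, but your version is both slightly simpler and slightly stronger.
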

\begin{proof}
For an arbitrary $f \in L^{2} (\mathbb{R}^3 )$ we have
\begin{gather}
\bigl\| G^{c}_{k} (A)\eta_{-\alpha} f \bigr\|  = \lim_{N \to \infty} \Bigl\|
\sum_{n =
1}^N G^{c}_{k}  (A) \eta_{-\alpha}
(\chi_n - \chi_{n-1} ) f \Bigr\| \nonumber   \\
\leq \lim_{N\to \infty} \sum_{n = 1}^N \Bigl\| G^{c}_{k}  (A) \chi_n
\eta_{-\alpha}
(\chi_n - \chi_{n-1})^2 f \Bigr\| , 
 \label{32}
\end{gather}
where we have used $(\chi_n - \chi_{n-1})^2 = (\chi_n - \chi_{n-1})$
and $\chi_n (\chi_n - \chi_{n-1}) = (\chi_n - \chi_{n-1})$.
For the operator norms we have $\| \eta_{-\alpha} \chi_1 \| = 1$ and $\|
\eta_{-\alpha} (\chi_n - \chi_{n-1}) \| = (n-1)^{-\alpha}$ for
$n\geq 2$. Substituting these into (\ref{32}) and using Lemma~\ref{lemma:1} we
rewrite (\ref{32}) as
\begin{gather}\label{33}
\| G^{c}_{k} (A)\eta_{-\alpha} f \| \leq b(A) \lim_{N\to \infty} \left(
\bigl\|\chi_1 f \bigr\| + \sum_{n =
2}^N  n(n-1)^{- \alpha} \bigl\| (\chi_n - \chi_{n-1})f \bigr\|\right) . 
\end{gather}
Now using that $\sum_n \|(\chi_n - \chi_{n-1} )f \|^2 =\|f\|^2$ and applying the
Cauchy-Schwartz inequality we get from Eq.~(\ref{33})
\begin{gather}\label{34}
\| G^{c}_{k} (A)\eta_{-\alpha} \| \leq b(A) \left( 1 + \sum_{n =
2}^\infty  n^2 (n-1)^{- 2\alpha} \right)^{1/2} . 
\end{gather}
For $\alpha > 3/2$ the series on the rhs of Eq.~(\ref{34}) obviously
converge. \end{proof}

\section{Proof of the Main Theorem}\label{sec:4}

We shall need an analogue of the IMS localization formula, see Ref.~\onlinecite{ims}. The
functions $J_a
\in C^\infty 
(\mathbb{R}^{3N-3})$ form the partition
of unity $\sum_a J^{2}_a =1$ and are homogeneous of
degree zero in the exterior of the unit sphere, {\em i.e.} $J_a (\lambda x) =
J_a (x)$ for $\lambda \geq 1$, $|x|= 1$ (this makes $|\nabla
J_a |$ fall off at infinity). Additionally, there exists a constant $C > 0$ such
that
\begin{equation}\label{ims5}
    \supp J_a \cap \{ x | |x| > 1 \} \subset \{x|\; |x_i - x_j | \geq C |x|
\quad
    \textrm{for $i \in \mathfrak{C}_1^a , j \in \mathfrak{C}_2^a$}\} . 
\end{equation}
The functions of the IMS decomposition can be chosen
\cite{sigalsays1,sigalsays2}
invariant under permutations of particle coordinates both in
$\mathfrak{C}_1^a$ and in $\mathfrak{C}_2^a$, hence $[J_a, \mathcal{P}^{(a)}] =
0$. Note also that for all $f \in \mathcal{H}^2 (\mathbb{R}^{3N-3})$ one has $J_a f \in \mathcal{H}^2 (\mathbb{R}^{3N-3})$, c. f. Lemma~7.4 in Ref.~\onlinecite{liebloss} 
(the proof in Ref.~\onlinecite{liebloss} easily extends to the case of 
Sobolev spaces of higher order). 
The following version of the IMS localization formula can be verified by the
direct substitution
\begin{equation}\label{14sex}
    \Delta = \Delta \sum_{a,b} J^2_a J^2_b  = \sum_{a,b} J_a J_b \Delta J_b J_a
+ 2 \sum_a |\nabla J_a|^2 ,
\end{equation}
where $\Delta$ is the Laplace on $\mathbb{R}^{3N-3}$. Rescaling (\ref{14sex}) we
get
\begin{equation}\label{14}
    H_0 = \sum_{a,b} J_a J_b H_0 J_b J_a + 2 \sum_a \sum_{s=1}^{3N-3}
\mathfrak{m}_s |\partial_s J_a|^2 ,
\end{equation}
where $\mathfrak{m}_s$ are real coefficients depending on masses and the second
term
is relatively $H_0$ compact \cite{ims}.
We introduce
\begin{equation}\label{cycha}
    H_a (\lambda)= H(\lambda) - I_a (\lambda)= H^{(a)}_{thr}(\lambda) -
\Delta_{R_a} . 
\end{equation}
From (\ref{14}) it follows that
\begin{equation}\label{ims3}
    H(\lambda) = \sum_a J^2_a H_a (\lambda) J^2_a + \sum_{a\neq b} J_a J_b
H_{ab}(\lambda) J_b J_a
+ K(\lambda) , 
\end{equation}
where we define
\begin{gather}
    K(\lambda) := \sum_{a\neq b} J^2_a J_b^2 I_{ab}(\lambda) +  \sum_a \Bigl[
J^{4}_a I_a(\lambda)
+ 2 \sum_{s=1}^{3N-3} \mathfrak{m}_s |\partial_s J_a|^2  \Bigr] , \label{14k}\\
H_{ab}(\lambda) := H_0 + \sum_{s=1}^2 \sum_{p=1}^2 \: \sum_{i,j \in
\mathfrak{C}^{a}_s
\cap \mathfrak{C}^{b}_p} V_{ij} (\lambda) , \\
I_{ab} (\lambda) := H(\lambda) - H_{ab}(\lambda) . 
\end{gather}
The Hamiltonian $H_{ab}$ defined for $a \neq b$ contains intercluster
interactions of the following
four clusters $\mathfrak{C}^a_1 \cap \mathfrak{C}^b_1$, $\mathfrak{C}^a_1 \cap
\mathfrak{C}^b_2$, $\mathfrak{C}^a_2 \cap \mathfrak{C}^b_1$, $\mathfrak{C}^a_2
\cap \mathfrak{C}^b_2$,
while all interaction cross--terms between these four clusters are contained in
$I_{ab}$. (For
some partitions it might happen that one of the four clusters is empty). If we
define by $\mathcal{P}^{(ab)}_{sp}$ the projection operator on the proper
symmetry subspace for particles within the cluster $\mathfrak{C}^a_s \cap
\mathfrak{C}^b_p$ then by the HVZ theorem \cite{reed,teschl,beattie} 
\begin{equation}\label{infsigm}
    \inf \sigma \bigl( H_{ab}(\lambda) \mathcal{P}^{(ab)} \bigr) \geq E_{thr}
(\lambda),
\end{equation}
where we define
\begin{equation}\label{pabproj}
    \mathcal{P}^{(ab)} := \mathcal{P}^{(ab)}_{11} \mathcal{P}^{(ab)}_{12}
\mathcal{P}^{(ab)}_{21} \mathcal{P}^{(ab)}_{22} . 
\end{equation}
Note that $[J_a J_b , \mathcal{P}^{(ab)}] = 0$.

\begin{lemma}\label{lemma:3}
Suppose that $H(\lambda)$ satisfies $R1-R4$. If  $\psi_n \wwto \phi_0 $ then
\begin{gather}
    \lim_{n \to \infty} \Bigl\| \left[1- P^{(a)}_{thr}(\lambda_n)\right] J^2_a
(\psi_n - \phi_0) \Bigr\| = 0 \quad (a=1, \ldots, \mathfrak{N}) , \label{claim1}\\
\lim_{n \to \infty} \Bigl\| J^2_a (\psi_n - \phi_0) \Bigr\| = 0  \quad
(a=\mathfrak{N}+1,
\ldots, 2^{N-1}-1) . \label{claim1:1}
\end{gather}
\end{lemma}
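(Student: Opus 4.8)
The plan is to upgrade the weak statement $\psi_n \wwto \phi_0$ to norm convergence of the spectrally localized pieces, using the eigenvalue equation together with the spectral gaps in R3. First I would record the preliminaries. Since $H(\lambda_n)\psi_n = E(\lambda_n)\psi_n$ and $V(\lambda_n)$ is $H_0$--bounded with relative bound $<1$ uniformly in $n$ by R2, the identity $H_0\psi_n = (E(\lambda_n)-V(\lambda_n))\psi_n$ gives $\sup_n \|H_0\psi_n\| < \infty$; testing against $f\in D(H_0)$ and using self--adjointness of $H_0$ then shows $\phi_0\in D(H_0)$ and $(H_0+1)\psi_n \wwto (H_0+1)\phi_0$. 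Because $J_a^2\psi_n\in D(H_0)$ and the potentials commute with the multiplication operator $J_a^2$, writing $H(\lambda_n)=H_a(\lambda_n)+I_a(\lambda_n)$ and $[H,J_a^2]=[H_0,J_a^2]$ in the eigenvalue equation yields
\begin{equation}\label{plan:comm}
\bigl(H_a(\lambda_n)-E(\lambda_n)\bigr)J_a^2\psi_n = [H_0,J_a^2]\psi_n - I_a(\lambda_n)J_a^2\psi_n .
\end{equation}

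Next I would set up the resolvent using the gaps. Writing $H_a = H^a_{thr}\otimes 1 -\Delta_{R_a}$ with $-\Delta_{R_a}\geq 0$, on the range of $\mathcal{P}^{(a)}$ one has $\bigl(H_a(\lambda_n)-E_{thr}(\lambda_n)\bigr)[1-P^{(a)}_{thr}(\lambda_n)] \geq |\Delta\epsilon|\,[1-P^{(a)}_{thr}(\lambda_n)]$ for $a\leq\mathfrak{N}$ by (\ref{need1}), and $H_a(\lambda_n)-E_{thr}(\lambda_n) \geq |\Delta\epsilon|$ for $a>\mathfrak{N}$ by (\ref{need2}). Since $E(\lambda_n)\to E_{thr}(\lambda_{cr})$, the operators $R_n := (H_a(\lambda_n)-E(\lambda_n))^{-1}\Pi_n$, with $\Pi_n := [1-P^{(a)}_{thr}(\lambda_n)]$ for $a\leq\mathfrak{N}$ and $\Pi_n := \mathcal{P}^{(a)}$ for $a>\mathfrak{N}$, are bounded by $2/|\Delta\epsilon|$ for large $n$. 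Noting that $J_a^2\psi_n$ lies in the range of $\mathcal{P}^{(a)}$ (since $\mathcal{P}^{(a)}\psi_n=\psi_n$ and $[J_a,\mathcal{P}^{(a)}]=0$) and that $\Pi_n$ commutes with $H_a(\lambda_n)$, applying $R_n$ to (\ref{plan:comm}) gives
\begin{equation}\label{plan:key}
\Pi_n J_a^2\psi_n = R_n\bigl([H_0,J_a^2]\psi_n - I_a(\lambda_n)J_a^2\psi_n\bigr) .
\end{equation}

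The heart of the argument is a compactness input. The operator $[H_0,J_a^2](H_0+1)^{-1}$ is compact and independent of $n$, because $\nabla J_a^2$ and $\Delta J_a^2$ vanish at infinity by the homogeneity of $J_a$. Each cross term $V_{ij}(\lambda_n)$ with $i\in\mathfrak{C}_1^a$, $j\in\mathfrak{C}_2^a$ is $H_0$--compact, since $U_{ij}$ and $|r_i-r_j|^{-1}$ belong to $L^2(\mathbb{R}^3)+L^\infty_\infty(\mathbb{R}^3)$; hence $I_a(\lambda_n)J_a^2(H_0+1)^{-1}$ is compact, and by the uniform domination in R2 it converges in operator norm to $I_a(\lambda_{cr})J_a^2(H_0+1)^{-1}$. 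Since $(H_0+1)\psi_n$ is bounded and converges weakly to $(H_0+1)\phi_0$, applying fixed compact operators and a norm--convergent family of compact operators yields $[H_0,J_a^2]\psi_n\to[H_0,J_a^2]\phi_0$ and $I_a(\lambda_n)J_a^2\psi_n\to I_a(\lambda_{cr})J_a^2\phi_0$ in norm. Finally $R_n\to R_\infty := (H_a(\lambda_{cr})-E_{thr}(\lambda_{cr}))^{-1}[1-P^{(a)}_{thr}(\lambda_{cr})]$ strongly: strong resolvent convergence $H_a(\lambda_n)\to H_a(\lambda_{cr})$ follows from R2, while the uniform gap in R3 forces norm convergence $P^{(a)}_{thr}(\lambda_n)\to P^{(a)}_{thr}(\lambda_{cr})$. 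Feeding these into (\ref{plan:key}) shows that $\Pi_n J_a^2\psi_n$ converges in norm.

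To finish I would identify the limit. The norm limit of $\Pi_n J_a^2\psi_n$ must coincide with its weak limit, which is $[1-P^{(a)}_{thr}(\lambda_{cr})]J_a^2\phi_0$ for $a\leq\mathfrak{N}$ and $J_a^2\phi_0$ for $a>\mathfrak{N}$, by the norm convergence of the projections and $J_a^2\psi_n\wwto J_a^2\phi_0$. Since $\Pi_n J_a^2\phi_0$ tends to the very same vector, subtracting gives $\|\Pi_n J_a^2(\psi_n-\phi_0)\|\to 0$, which is precisely (\ref{claim1}) and (\ref{claim1:1}). I expect the main obstacle to be the strong convergence $R_n\to R_\infty$: it requires upgrading the strong resolvent convergence of $H_a(\lambda_n)$ to norm convergence of the finite--rank threshold projections $P^{(a)}_{thr}(\lambda_n)$, and this is exactly the point where the uniform spectral gap (\ref{need1})--(\ref{need2}) of R3 and the uniform domination of R2 are indispensable.
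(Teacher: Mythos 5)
Your proposal takes a genuinely different route from the paper (a resolvent/compactness argument rather than a quadratic--form argument), but it contains a genuine gap at precisely the step you flag as the crux. The claim that strong resolvent convergence of $H_a(\lambda_n)$ --- which is all that R2 provides --- together with the uniform gap of R3 ``forces norm convergence $P^{(a)}_{thr}(\lambda_n)\to P^{(a)}_{thr}(\lambda_{cr})$'' is false. A uniform spectral gap does not upgrade strong resolvent convergence to norm control of spectral projections: take $H_n=1-P_{e_n}$ with $\{e_n\}$ orthonormal; then $H_n\to 1$ in the strong resolvent sense, every $H_n$ has the gap $(0,1)$ in its spectrum, yet the spectral projections onto $[-1/2,1/2]$ satisfy $\|P_{e_n}-0\|=1$ for all $n$. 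Worse, in the present setting even \emph{strong} convergence to $P^{(a)}_{thr}(\lambda_{cr})$ can fail: the window in (\ref{projopa}) is closed, and nothing in R1--R4 prevents an eigenvalue of $H^a_{thr}(\lambda_n)$ lying just above the window from converging to the endpoint $E_{thr}(\lambda_{cr})+|\Delta\epsilon|$; then the limiting projection has strictly larger rank than any $P^{(a)}_{thr}(\lambda_n)$, no strong convergence to it is possible, and both your strong convergence $R_n\to R_\infty$ and your identification of the weak limit of $\Pi_n J_a^2\psi_n$ break down. What can rescue the scheme is the hypothesis you never invoke for this purpose, namely R4: uniform exponential decay together with the uniform bounds on $\|\Delta\varphi^a_i(\lambda_n)\|$ makes the family of threshold eigenfunctions precompact in norm, so from any subsequence one can extract a sub--subsequence along which $P^{(a)}_{thr}(\lambda_n)$ converges in norm to \emph{some} projection $P'\leq P^{(a)}_{thr}(\lambda_{cr})$, possibly strictly smaller. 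The argument must then be restructured: all limits are identified relative to $P'$ rather than $P^{(a)}_{thr}(\lambda_{cr})$ (this is harmless, since the same $\Pi_n$ acts on both $\psi_n$ and $\phi_0$, so the subtraction still cancels), and one concludes for the full sequence by the subsequence principle.

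A second, more minor flaw: the assertion that each cross term $V_{ij}(\lambda_n)$, $i\in\mathfrak{C}^a_1$, $j\in\mathfrak{C}^a_2$, is by itself $H_0$--compact is wrong for $N\geq 3$; a pair potential $V_{ij}(r_i-r_j)$ does not decay in the directions of $\mathbb{R}^{3N-3}$ where $r_i-r_j$ stays bounded while the remaining coordinates grow (this non--compactness is exactly why the HVZ theorem is nontrivial). Compactness holds only for the localized product $I_a(\lambda_n)J_a^2(H_0+1)^{-1}$, and its proof must use the support property (\ref{ims5}), which guarantees $|r_i-r_j|\geq C|x|$ on $\supp J_a$ for cross pairs; your conclusion is correct but the stated justification is not. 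For comparison, the paper's proof avoids both difficulties entirely by staying at the quadratic--form level: Lemma~10 of Ref.~\onlinecite{1} gives $\bigl((\psi_n-\phi_0),[H(\lambda_n)-E_{thr}(\lambda_n)](\psi_n-\phi_0)\bigr)\to 0$, the IMS formula (\ref{ims3}) splits this into a sum of terms that are individually non--negative by (\ref{infsigm}) and the cluster decomposition, hence each term vanishes separately, and then the gap inequalities (\ref{need2}) and (\ref{insineq}) of R3 convert this vanishing directly into (\ref{claim1}) and (\ref{claim1:1}); no convergence of projections or resolvents is ever needed, which is what makes the paper's route shorter and more robust.
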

\begin{proof}
Note that $\phi_0 \in D(H_0)$ by Lemmas~1, 2(a) in Ref.~\onlinecite{1}. For every $g \in
L^2 (\mathbb{R}^{3N-3}; \mathbb{C}^{n_s})$ we have 
\begin{gather}
 \bigl( g, [1-\mathcal{P}] \phi_0 \bigr) = \bigl([1-\mathcal{P}]  g, \phi_0
\bigr) \nonumber \\
= \lim_{n \to \infty} \bigl([1-\mathcal{P}]  g, \psi_{n} \bigr) = \lim_{n \to
\infty} \bigl( g, [1-\mathcal{P}] \psi_{n} \bigr) = 0. \label{13.07;1}
\end{gather}
Because $g$ in (\ref{13.07;1}) is arbitrary we conclude that $\mathcal{P} \phi_0
= \phi_0$. Consequently, $\mathcal{P}^{(a)} \phi_0 = \phi_0$.

Following the arguments of the proof of Lemma~10 in Ref.~\onlinecite{1} we get
\begin{equation}\label{k112:}
\lim_{n \to \infty} \bigl( (\psi_n-\phi_0), \left[ H(\lambda_n) -
E_{thr}(\lambda_n)
\right](\psi_n-\phi_0) \bigr) = 0.
\end{equation}
We define $\tilde K$ exactly as $K(\lambda)$ in (\ref{14k}), except that all
$V_{ij}$
entering $K$ via $I_a (\lambda)$ and $I_{ab} (\lambda)$ are replaced with
\begin{equation}\label{tildevij}
    \tilde V_{ij} := \tilde U_{ij} + \frac{q_0}{|x_i - x_j|} ,
\end{equation}
where $\tilde U_{ij} := \tilde U (x_i - x_j)$.
Then $\tilde K$ does not depend on $\lambda$ and is relatively $H_0$ compact.
Besides for all $f \in D(H_0)$ one has $|(f, K(\lambda)f)| \leq (f, \tilde K
f)$. Thus
\begin{equation}\label{k112:2}
\lim_{n \to \infty}  \bigl( (\psi_n-\phi_0), K(\lambda_n) (\psi_n-\phi_0) \bigr)
= 0,
\end{equation}
see the proof of Lemma~10 in Ref.~\onlinecite{1}.
Substituting (\ref{ims3}) into (\ref{k112:}) and using (\ref{k112:2}) yields
\begin{gather}\label{19}
\lim_{n \to \infty}  \sum_a \Bigl( (\psi_n - \phi_0), J^2_a \left[H_a
(\lambda_n) -
E_{thr} (\lambda_n)\right]J^2_a (\psi_n - \phi_0)\Bigr) \nonumber \\
+ \sum_{a\neq b} \Bigl( (\psi_n - \phi_0) , J_a J_b \left[H_{ab} (\lambda_n) -
E_{thr}
(\lambda_n)\right] J_b J_a (\psi_n - \phi_0)\Bigr) = 0 . 
\end{gather}
The scalar products under the first sum are clearly non--negative. The terms
under the second sum are non--negative 
by (\ref{infsigm}) (one can insert
$\mathcal{P}^{(ab)}$ because $\mathcal{P}^{(ab)} \mathcal{P} = \mathcal{P}$ and
$[J_a J_b , \mathcal{P}^{(ab)}] = 0$).
Thus we obtain
\begin{equation}\label{allparts}
\lim_{n \to \infty}  \Bigl( (\psi_n - \phi_0), J^2_a \left[H_a (\lambda_n) -
E_{thr} (\lambda_n)\right] J^2_a (\psi_n - \phi_0)\Bigr) = 0
\end{equation}
for all partitions $a=1,\ldots, 2^{N-1}-1$.
Using (\ref{cycha}) and $-\Delta_{R_a}$ being non--negative gives
\begin{equation}\label{allparts:1}
\lim_{n \to \infty}   \Bigl( (\psi_n - \phi_0), J^2_a \left[H^{(a)}_{thr}
(\lambda_n) - E_{thr} (\lambda_n)\right] \mathcal{P}^{(a)} J^2_a (\psi_n -
\phi_0)\Bigr) =
0,
\end{equation}
where we have inserted $\mathcal{P}^{(a)}$. For $a \geq \mathfrak{N}+1$  the
statement of
the lemma given by (\ref{claim1:1}) easily follows from (\ref{allparts:1}) and
R3.
To prove (\ref{claim1}) it suffices to insert into (\ref{allparts:1})
the identity $1 = P^{(a)}_{thr} + (1-P^{(a)}_{thr})$ and to use the inequality
\begin{equation}\label{insineq}
    \left[H^{(a)}_{thr} (\lambda_n) - E_{thr} (\lambda_n)\right]
\left(1-P^{(a)}_{thr}(\lambda_n)
\right)\mathcal{P}^{(a)}  \geq |\Delta
\epsilon|\left(1-P^{(a)}_{thr}(\lambda_n)
\right)\mathcal{P}^{(a)} \quad(a = 1, \ldots, \mathfrak{N}),
\end{equation}
which follows from (\ref{projopa}). \end{proof}
Using Lemma~\ref{lem:thet} we prove
\begin{lemma}\label{lem:6}
Suppose that $H(\lambda)$ satisfies $R1-R4$ and $\psi_n$ defined in $R1$
converges weakly. Then for $a=1,2, \ldots, \mathfrak{N}$ the
sequence $P^{(a)}_{thr} (\lambda_n) \psi_n $ does not spread.
\end{lemma}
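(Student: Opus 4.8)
The plan is to project the bound--state equation onto the threshold channel in the intrinsic coordinate $x^a$, thereby reducing it to a Schr\"odinger equation in the single intercluster variable $R_a$ in which the cluster charge product $Q^a$ appears as the repulsive Coulomb potential of Sec.~\ref{sec:3}. Fix $a\le\mathfrak N$ and set $u_n:=P^{(a)}_{thr}(\lambda_n)\psi_n$. Writing $P^{(a)}_{thr}=P^a_{thr}\otimes 1$ as the orthogonal projection onto the span of the $\varphi^a_i(\lambda_n)$ from~(\ref{projopa}), I expand $u_n=\sum_{i=1}^{n_a(\lambda_n)}\varphi^a_i(\lambda_n)\otimes g_{n,i}$ with partial waves $g_{n,i}(R_a):=\bigl(\varphi^a_i(\lambda_n),\psi_n\bigr)_{x^a}\in L^2(\mathbb R^3)$. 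Since $P^{(a)}_{thr}$ commutes with $-\Delta_{R_a}$ and, on the symmetry subspace containing $\psi_n$, with $H^{(a)}_{thr}$, applying $P^{(a)}_{thr}$ to $H(\lambda_n)\psi_n=E(\lambda_n)\psi_n$ in the form~(\ref{hthra}) and projecting onto each $\varphi^a_j$ yields $\bigl(-\Delta_{R_a}+E^a_j(\lambda_n)-E(\lambda_n)\bigr)g_{n,j}=-\bigl(\varphi^a_j,I_a(\lambda_n)\psi_n\bigr)_{x^a}$, where I used $H^{(a)}_{thr}\varphi^a_j=E^a_j\varphi^a_j$.

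Next I isolate the long--range Coulomb tail by writing $I_a=Q^a\eta_{-1}(R_a)+W_a$ with $W_a:=I_a-Q^a(\lambda_n)\eta_{-1}(R_a)$. The scalar factor $Q^a\eta_{-1}(R_a)$ passes through the $x^a$--integration onto the left--hand side, giving
\[
\bigl(-\Delta_{R_a}+Q^a(\lambda_n)\eta_{-1}(R_a)+\kappa_{n,j}^2\bigr)g_{n,j}=-s_{n,j},
\]
with $s_{n,j}:=\bigl(\varphi^a_j,W_a\psi_n\bigr)_{x^a}$ and $\kappa_{n,j}^2:=E^a_j(\lambda_n)-E(\lambda_n)\ge E_{thr}(\lambda_n)-E(\lambda_n)>0$. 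In the notation~(\ref{def:Gk}) this reads $g_{n,j}=-G^c_{\kappa_{n,j}}\bigl(Q^a(\lambda_n)\bigr)s_{n,j}$, and because $Q^a(\lambda_n)\ge Q_0>0$ by~(\ref{Q0new}) the kernel monotonicity recalled after~(\ref{def:Gk}) gives the pointwise bound $\abs{g_{n,j}}\le G^c_{\kappa_{n,j}}(Q_0)\,\abs{s_{n,j}}$. This is the step that consumes $Q_0>0$: the repulsion produces exactly the operator whose Green's function was estimated in Lemmas~\ref{lemma:1} and~\ref{lemma:2}. The source is then tamed by Lemma~\ref{lem:thet} and R4. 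From $\bigl|e^{-\beta\abs{x^a}}\eta_\delta(R_a)W_a\bigr|\le\Theta_a$ and $\norm{e^{\beta\abs{x^a}}\varphi^a_j}\le A$ (a consequence of R4 since $\varphi^a_j=P^a_{thr}\varphi^a_j$) one gets $\eta_\delta(R_a)\abs{s_{n,j}(R_a)}\le\bigl(e^{\beta\abs{x^a}}\abs{\varphi^a_j},\Theta_a\abs{\psi_n}\bigr)_{x^a}$. Splitting $\Theta_a=\Theta_a^{(2)}+\Theta_a^{(\infty)}$ and bounding the $L^\infty_\infty$ part by $\norm{\Theta_a^{(\infty)}}_\infty A\norm{\psi_n(\cdot,R_a)}_{x^a}$ and the $L^2$ part, via R2 and the $H_0$--boundedness of the pair terms, by a multiple of $\norm{H_0\psi_n}$ (uniformly bounded, cf.\ Lemma~1 in Ref.~\onlinecite{1}), I conclude $\sup_n\norm{\eta_\delta(R_a)s_{n,j}}<\infty$, i.e.\ $\abs{g_{n,j}}\le G^c_{\kappa_{n,j}}(Q_0)\,\eta_{-\delta}\,\abs{\tilde s_{n,j}}$ with $\delta>3/2$ and $\sup_n\norm{\tilde s_{n,j}}<\infty$.

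It remains to pass from this uniform bound to non--spreading, which I do by refining Lemma~\ref{lemma:2} to the tail statement $\lim_{R\to\infty}\sup_{k>0}\norm{\chi_{\{\abs r>R\}}G^c_k(Q_0)\eta_{-\delta}}=0$. This follows from the very shell decomposition used to prove Lemmas~\ref{lemma:1} and~\ref{lemma:2}: the far shells contribute the tail $\bigl(\sum_{m>M}m^2(m-1)^{-2\delta}\bigr)^{1/2}$, small uniformly in $k$ because $\delta>3/2$, while for each fixed near shell the $k$--independent stretched--exponential kernel bound~(\ref{7})--(\ref{15}) drives the Hilbert--Schmidt norm of $\chi_{\{\abs r>R\}}G^c_k(Q_0)\eta_{-\delta}\chi_M$ to zero as $R\to\infty$. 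Granting the refinement, $\sup_n\norm{\chi_{\{\abs{R_a}>R\}}g_{n,j}}\to0$; together with the uniform exponential tightness $\norm{\chi_{\{\abs{x^a}>R\}}\varphi^a_j}\le Ae^{-\beta R}$ from R4, summed over the at most $\omega$ channels, this gives uniform tightness of $u_n$ and hence the claim by the criteria of Appendix~\ref{sec:6}.

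The main obstacle is precisely this uniformity in $k$: as $n\to\infty$ the binding momentum $\kappa_{n,j}\to0$, so the free resolvent $(-\Delta_{R_a}+\kappa_{n,j}^2)^{-1}$ acquires a slowly decaying kernel $\sim e^{-\kappa_{n,j}\abs{R_a-R_a'}}/\abs{R_a-R_a'}$ that would by itself let $g_{n,j}$ spread; it is only the Coulomb repulsion $Q_0>0$ that replaces this by the $k$--independent decay of~(\ref{7}), keeping $\sup_{k>0}$ finite and the tails uniformly small. Verifying that this uniformity survives both the shell decomposition and the projection onto the non--threshold remainder hidden in $s_{n,j}$ is the crux; everything else is bookkeeping over the finite channel index $j$ and the two tensor factors.
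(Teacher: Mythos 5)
Your proposal is correct, but it reaches the conclusion by a genuinely different route than the paper. Up to the resolvent equation $g_{n,j}=-G^c_{\kappa_{n,j}}\bigl(Q^a(\lambda_n)\bigr)s_{n,j}$, the pointwise domination via $Q^a\geq Q_0$, and the taming of the source by Lemma~\ref{lem:thet} and R4, you and the paper run in parallel (the paper works with the operators $P_{\varphi^a_i}(\lambda_n)$ on the full space rather than with the partial waves $g_{n,j}$, which is cosmetic; note also that both arguments consume the hypothesis (\ref{Q0new}), which the lemma statement formally omits). The divergence is in how the $k\to 0$ uniformity is handled. The paper proves no spatial tail estimate at all: it splits $\psi_n=(\psi_n-\phi_0)+\phi_0$ around the weak limit, kills the $(\psi_n-\phi_0)$ contribution in norm using relative $H_0$-compactness of $\Theta_a$ (this is exactly where the weak-convergence hypothesis enters), and is then left with $h_n=G^c_{k_n}(Q_0)\eta_{-\delta}\Theta_a|\phi_0|$ --- a \emph{fixed} source hit by resolvents with varying $k_n$ --- whose non-spreading follows from the pointwise monotonicity of the kernel in $k$ combined with the monotone-subsequence criterion of Lemma~\ref{rsd2}; the pieces are reassembled via Lemma~\ref{lem:5}. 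You instead keep $\psi_n$ whole, observe that $\sup_n\|\eta_\delta(R_a)s_{n,j}\|\leq A\,\sup_n\|\Theta_a\psi_n\|<\infty$ (correct: $\Theta_a$ is relatively $H_0$-bounded and $\|H_0\psi_n\|$ is uniformly bounded, cf.\ Lemma~1 in Ref.~\onlinecite{1}), and then derive uniform tightness from the strengthened estimate $\lim_{R\to\infty}\sup_{k>0}\|\chi_{\{|r|>R\}}G^c_k(Q_0)\eta_{-\delta}\|=0$. That refinement is the one piece of work beyond the paper's lemmas, and your sketch of it is sound: the far shells are controlled uniformly in $k$ and $R$ by Lemma~\ref{lemma:1} and Cauchy--Schwarz exactly as in (\ref{34}) (using $\delta>3/2$), while for each fixed near shell the $k$-independent bound (\ref{15}) drives the Hilbert--Schmidt norm of $\chi_{\{|r|>R\}}G^c_k(Q_0)\chi_m$ to zero as $R\to\infty$. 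What each approach buys: yours yields a strictly stronger conclusion --- uniform tightness of $P^{(a)}_{thr}(\lambda_n)\psi_n$, with the weak-convergence hypothesis never used and no appeal to the appendix criteria --- whereas the paper's argument gets by with Lemma~\ref{lemma:2} exactly as stated (a uniform operator-norm bound only), paying for this economy with the weak-limit splitting and the monotonicity trick.
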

\begin{proof}
So let us assume that $\psi_n \wwto \phi_0$, where $\phi_0 \in  D(H_0)$, see
Lemma~1 in Ref.~\onlinecite{1}. The Schr\"odinger equation can be written as
\begin{gather}
    \Bigl\{ \bigl[ H^{(a)}_{thr}(\lambda_n) - E_{thr}(\lambda_n)\bigr ]
-\Delta_{R_a}  +
Q^a (\lambda_n) \eta_{-1} (R_a ) +
    k_n^2 \Bigr \} \psi_n \nonumber \\
    = - \left[ I_a (\lambda_n ) - Q^a (\lambda_n) \eta_{-1} (R_a ) \right]
\psi_n , \label{schro:2}
\end{gather}
where $k_n^2 := E_{thr}(\lambda_n) - E(\lambda_n)$. By Lemma~\ref{lem:4} we can
write
\begin{equation}\label{parepr}
    P^{(a)}_{thr} (\lambda_n ) = \sum_{i=1}^\omega E_i^a(\lambda_n )
P_{\varphi^a_i} (\lambda_n ),
\end{equation}
where $P_{\varphi^a_i} (\lambda_n ) = \varphi^a_i (\cdot, \varphi^a_i) \otimes
1$ and $\varphi^a_i (\lambda_n )$ are
orthonormal eigenstates of $H^{(a)}_{thr}(\lambda_n )$ corresponding to
eigenvalues $E_i^a (\lambda_n )$ ($\varphi^a_i (\lambda_n) = 0$, where
necessary).
By (\ref{projopa}) we have $E_i^a (\lambda_n) \in [E_{thr}(\lambda_n) , E_{thr}(\lambda_n) + |\Delta \epsilon|]$. 
Because the sum in (\ref{parepr}) runs over a finite number of terms (see
Lemma~\ref{lem:4}) to prove
the theorem it suffices to show that $P_{\varphi^a_i} (\lambda_n) \psi_n $ does
not
spread. Acting with $P_{\varphi^a_i} (\lambda_n)$ on both sides of
(\ref{schro:2}) results in
\begin{gather}
\left[ -\Delta_{R_a} + Q^a (\lambda_n) \eta_{-1} (R_a) + {k'}_n^2 \right]
P_{\varphi^a_i} (\lambda_n) \psi_n \label{schro:3}\\
= - P_{\varphi^a_i} (\lambda_n)\left[ I_a (\lambda_n)  - Q^a (\lambda_n)
\eta_{-1} (r) \right] \psi_n
\label{schro:4},
\end{gather}
where ${k'}_n:= \bigl[| E_i^a (\lambda_n) - E_{thr}(\lambda_n)| + k_n^2
\bigr]^{1/2}$. Acting with
\begin{equation}\label{gkn'}
    G_{{k}'_n}^c (Q^a (\lambda_n)) := 1  \otimes \left[ -\Delta_{R_a} + Q^a
(\lambda_n) \eta_{-1}
(R_a) + {k'}_n^2 \right]^{-1}
\end{equation}
on both sides of (\ref{schro:3})--(\ref{schro:4}) we get
\begin{equation}\label{schro:5}
P_{\varphi^a_i} (\lambda_n)  \psi_n  = - G_{{k}'_n}^c (Q^a (\lambda_n))
\eta_{-\delta} (R_a)
P_{\varphi^a_i} (\lambda_n) \eta_{\delta} (R_a)
    \left[ I_a (\lambda_n) - Q^a (\lambda_n) \eta_{-1} (R_a) \right] \psi_n ,
\end{equation}
where we have inserted $\eta_\delta \eta_{-\delta} = 1$ ($\delta$ is defined in
$R2$).
Adding and subtracting $\phi_0$ from $\psi_n$ we rewrite (\ref{schro:5}) as
inequality
\begin{gather}
\left| P_{\varphi^a_i} (\lambda_n) \psi_n \right| \nonumber \\
\leq \left|  G_{{k}'_n}^c (Q^a (\lambda_n))\eta_{-\delta} (R_a)  P_{\varphi^a_i}
(\lambda_n)
\eta_{\delta} (R_a) \left[ I_a (\lambda_n) - Q^a (\lambda_n) \eta_{-1} (R_a)
\right] (\psi_n
-\phi_0)\right| \nonumber \\
+ \left|  G_{{k}'_n}^c (Q^a (\lambda_n))  \eta_{-\delta}
(R_a)P_{\varphi^a_i}
(\lambda_n)\eta_{\delta} (R_a) \left[ I_a (\lambda_n) - Q^a (\lambda_n)
\eta_{-1} (R_a) \right]
\phi_0\right| . 
\end{gather}
This can be continued as 
\begin{gather}
\left| P_{\varphi^a_i} (\lambda_n) \psi_n \right| \nonumber\\
\leq G_{k_n}^c (Q_0)  \eta_{-\delta} (R_a) P_{|\varphi^a_i |} (\lambda_n)\Bigl|
\eta_{\delta} (R_a)  \left[ I_a (\lambda_n) - Q^a (\lambda_n) \eta_{-1} (R_a)
\right] (\psi_n
-\phi_0)\Bigr| \nonumber \\
+ G_{k_n}^c (Q_0)  \eta_{-\delta} (R_a) P_{|\varphi^a_i |}  (\lambda_n) \Bigl|
\eta_{\delta} (R_a)  \left[ I_a (\lambda_n) - Q^a (\lambda_n) \eta_{-1} (R_a)
\right] \phi_0\Bigr|,
\end{gather}
where we define $P_{|\varphi^a_i |} := |\varphi^a_i | (\cdot, |\varphi^a_i
|)\otimes 1$ and use $Q^a (\lambda_n) >Q_0$, ${k'}^2_n \geq k_n^2$ (see remark
after Eq.~(\ref{def:Gk})).
Finally, applying Lemma~\ref{lem:thet} we write
\begin{equation}\label{gnhn}
    \left| P_{\varphi^a_i} (\lambda_n) \psi_n \right|  \leq g_n +
P_{|\varphi^a_i |}
(\lambda_n) e^{\beta |x^a|} h_n ,
\end{equation}
where we define
\begin{gather}
g_n := G_{k_n}^c (Q_0)  \eta_{-\delta} (R_a) P_{|\varphi^a_i |} (\lambda_n)
e^{\beta
|x^a|} \Theta_a (x) \bigl|  \psi_n -\phi_0\bigr| ,  \label{gn}\\
h_n := G_{k_n}^c (Q_0)  \eta_{-\delta} (R_a) \Theta_a (x)\bigl|  \phi_0\bigr| . 
\label{hn}
\end{gather}
It remains to prove that both terms on the rhs of (\ref{gnhn}) do not spread.
We have
\begin{gather}
\| g_n \| \leq \bigl\| G_{k_n}^c (Q_0)  \eta_{-\delta} (R_a) \bigr\| \times
\bigl\| P_{|\varphi^a_i |} (\lambda_n) e^{\beta |x^a|} \bigr\| \times \Bigl\|
\Theta_a (x) (\psi_n -\phi_0)\Bigr\| . 
\end{gather}
The first two operator norms are uniformly bounded by Lemma~\ref{lemma:2}
and R4. Note that $\Theta_a (x)$ is relatively
$H_0$ compact by Lemma~\ref{lem:thet}, see Lemma~7.11 in Ref.~\onlinecite{teschl}.
Therefore, the last norm goes to zero by Lemma~2 in 
Ref.~\onlinecite{1}. Hence, $\|g_n\| \to 0$ and $g_n$ does
not spread. By the same reasoning the sequence $|h_n|$ is uniformly
norm--bounded. The kernel $ G_{k_n}^c$ is pointwise dominated by the kernel of
$G_{k_s}^c$ if $k_s \leq k_n$, see remark after Eq.~(\ref{def:Gk}). Using this
fact it is easy to see
that $h_n$
satisfies the conditions of Lemma~\ref{rsd2} and therefore does not spread.
Now the rhs of (\ref{gnhn}) does not spread by Lemma~\ref{lem:5} since
$\| e^{\beta|x^a|} P_{|\varphi^a_i |} (\lambda_n) e^{\beta|x^a|} \| \leq A^2$ by
R4.
\end{proof}

Now we can prove the main theorem.
\begin{proof}[Proof of Theorem~\ref{th:main}]
We shall first prove that $\psi_n$ does not spread. 
To prove this, by Lemma~4 of Ref.~\onlinecite{1} it is sufficient to 
show that every weakly converging subsequence 
of $\psi_n$ converges also in norm. So let us assume that $\psi_{n_k} \wwto
\phi_0$, where $\psi_{n_k} $ is some weakly convergent 
subsequence of $\psi_n$, $\phi_0 \in D(H_0)$ by Lemmas~1, 2(a) in Ref.~\onlinecite{1}. (Let
us remark that 
one weakly converging subsequence always exists due to the Banach--Alaoglu
theorem). Repeating the arguments in the proof of Lemma~\ref{lemma:3} (around
Eq.~\ref{13.07;1}) 
we conclude that  $\mathcal{P} \phi_0 = \phi_0$. 
Our next aim is to
show that $\psi_{n_k}$ does not spread; by Lemmas~1, 3(a) of 
Ref.~\onlinecite{1} this will imply that $\psi_{n_k} \to \phi_0$ in norm. 
The following identity is obvious
\begin{gather}
\psi_{n_k} = \sum_{a=1}^\mathfrak{N} P^{(a)}_{thr} (\lambda_{n_k}) J_a^2
(\psi_{n_k} -\phi_0) + \sum_{a=1}^\mathfrak{N}
\bigl[ 1-P^{(a)}_{thr} (\lambda_{n_k}) \bigr] J_a^2
(\psi_{n_k} -\phi_0)  \nonumber \\
+ \sum_{a=\mathfrak{N}+1}^{2^{N-1}-1}J_a^2
(\psi_{n_k} -\phi_0) + \phi_0 . \label{obvi:1}
\end{gather}
The last two sums on the rhs go to zero in norm by Lemma~\ref{lemma:3}. It
suffices to prove
that each term in the first sum does not spread. Using $\sum_a J_a^2 =1$ we
write
\begin{gather}
    \bigl| P^{(a)}_{thr} (\lambda_{n_k}) J_a^2  (\psi_{n_k} -\phi_0) \bigr| \leq
\bigl|
P^{(a)}_{thr}  (\lambda_{n_k}) (\psi_{n_k} - \phi_0)\bigr| + \sum_{b\neq a}
\bigl|
P^{(a)}_{thr}(\lambda_{n_k}) J_b^2 ( \psi_{n_k} - \phi_0 )\bigr| \nonumber \\
        \leq  \bigl| P^{(a)}_{thr}  (\lambda_{n_k}) \psi_{n_k} \bigr| + \bigl|
P^{(a)}_{thr}(\lambda_{n_k})
\phi_0 \bigr|  +
\sum_{b\neq a} \bigl| P^{(a)}_{thr}(\lambda_{n_k}) J_b^2 ( \psi_{n_k} - \phi_0
)\bigr| . 
\label{claim3:2}
\end{gather}
The first two terms on the rhs of (\ref{claim3:2}) do not spread by
Lemmas~\ref{lem:6},\ref{lem:5} respectively. It remains to show that each term under the sum on
the rhs of (\ref{claim3:2}) goes to zero in norm. Indeed, for $b \neq a$
\begin{gather}
\bigl\| P^{(a)}_{thr}(\lambda_{n_k}) J_b^2 ( \psi_{n_k} - \phi_0 )\bigr\| \leq
\bigl\|
P^{(a)}_{thr}(\lambda_{n_k}) e^{\beta |x^a|} \bigr\| \times \bigl\|e^{-\beta
|x^a|}
J_b^2 ( \psi_{n_k} - \phi_0 )\bigr\| . 
\end{gather}
The operator norm on the rhs is uniformly bounded by $R4$.  The second norm goes
to zero
because $e^{-\beta |x^a|} J_b^2 \in L^\infty_\infty (\mathbb{R}^{3N-3})$ and is
thus
relatively $H_0$ compact (c. f. Lemma~2 in Ref.~\onlinecite{1} and Lemma~7.11 in
Ref.~\onlinecite{teschl}). 
Thus we have proved that $\psi_{n_k} \to \phi_0$ in norm and  $\psi_n$ does not spread. By Theorem~1 in Ref.~\onlinecite{1}
there exists $\psi_{cr} \in D(H_0)$ such that
$H(\lambda_{cr}) \psi_{cr} = E_{thr} (\lambda_{cr}) \psi_{cr}$. From Eqs.
(10)--(11) in Ref.~\onlinecite{1} it is easy to see that we can set $\psi_{cr} = \phi_0$, 
which results in $\|\psi_{cr} \| = 1$ and $\psi_{cr} = \mathcal{P} \psi_{cr}$. 
\end{proof}

\section{Applications}\label{sec:5}

\subsection{Three Coulomb charges with finite masses}\label{sec:5.1}

We consider the Coulomb Hamiltonian of three particles with charges $\{ q_1 ,
q_2, -1 \} $
and masses $\{ m_1 , m_2, m_3 \} $.
We use Jacobi coordinates $\xi = r_3 - r_2$, $R = r_1 - r_2 - s\xi$, where $s =
m_3 / (m_3 + m_2)$. The Hamiltonian reads \cite{jmpold}
\begin{equation}\label{ap1}
    H(q_1, q_2) = -\frac1{2 \mu_{23}} \Delta_\xi -  \frac 1{2 \mu} \Delta_R -
\frac{q_2}{|\xi |} - \frac{q_1}{|(1-s)\xi - R |} + \frac{q_1 q_2}{|a\xi +R|} , 
\end{equation}
where $\mu_{ik} = m_i m_k /(m_i + m_k )$, $\mu = m_1 (m_2 +m_3
)/(m_1 + m_2 + m_3 )$ are reduced masses.
We keep the masses fixed making $ H(q_1, q_2)$ depend on $q_{1,2} \geq 0$. 
By the Kato's theorem \cite{teschl,reed} $H(q_1 , q_2)$ is a self--adjoint
operator acting in $L^2 (\mathbb{R}^6)$ with the domain 
$D(H) = \mathcal{H}^2 (\mathbb{R}^6)$. The particle spins can be neglected here
and in order to apply the previous formalism we simply 
set all particle spins to zero.

The Hamiltonian  
$ H(q_1, q_2)$ 
is called \textit{stable} if $\inf \sigma  H(q_1, q_2) < E_{thr} (q_1 , q_2)$,
where
$ E_{thr} (q_1 , q_2) := \inf \sigma_{ess} H (q_1 , q_2)$.
A typical stability diagram \cite{martin} for
$ H(q_1, q_2)$ is sketched in Fig.~1. The properties of the stability diagram
are
discussed in detail in Ref.~\onlinecite{martin}. We mention some key features of the
stability diagram, for details see Ref.~\onlinecite{martin}. In the
square $\{q_{1,2}|\: 0 < q_{1,2} <1\}$ the Hamiltonian $H(q_1 , q_2)$ is always
stable (due to
long--range attraction between the
bound pair and the third particle). The line of equal energy
thresholds is determined through $\mu_{23} q_{2}^2 = \mu_{13} q_{1}^2 $ and
divides the plane into upper and lower sectors, where the lowest dissociation
threshold corresponds to $\{123\} \to \{23\} + 1$ and $\{123\} \to \{13\} + 2$
respectively. In each sector the stability area is shaped by two arcs,
which form a cusp on
the line of equal energy thresholds, just like in Fig.~1. The arc in the upper
sector starts at $(\mathfrak{q_1},  1)$ and in the lower sector at $(1,
\mathfrak{q_2})$
and both end up on the line of equal thresholds. 
The points $\{q_{1,2}|\: 0 <
q_1 \leq \mathfrak{q}_1, q_2 = 1 \} $ and
$\{q_{1,2}|\: 0 < q_2 \leq \mathfrak{q}_2, q_1 = 1 \} $ correspond to unstable
$H(q_1 , q_2)$ and the
points
$\{q_{1,2}|\: q_1 \geq \mathfrak{q}_1, q_2 = 1 \} \cap \{q_{1,2}|\: \mu_{23}
q_{2}^2 \geq \mu_{13} q_{1}^2 \}$ and
$\{q_{1,2}|\: q_2 > \mathfrak{q}_2, q_1 = 1 \} \cap \{q_{1,2}|\: \mu_{23}
q_{2}^2 \leq \mu_{13} q_{1}^2 \}$ correspond \cite{martin} to stable $H(q_1 ,
q_2)$. 
Suppose that $H(q_1^o , q_2^o)$ is unstable. If $(q_1^o , q_2^o)$ lies in the
upper sector then $H(q_1^o - s_1, q_2^o)$ and $H(q_1^o , q_2^o + s_2)$ are also unstable,
where $s_1 \in [0, q_1^o ]$ and $s_2 \in [0, \infty)$ respectively. If $(q_1^o , q_2^o)$ lies in the
lower sector then $H(q_1^o + s_1, q_2^o)$ and $H(q_1^o , q_2^o - s_2)$ are also unstable,
where $s_1 \in [0, \infty)$ and $s_2 \in [0, q_2^o ]$ respectively. 
Due to 
the so--called ``overheating'' effect \cite{ruskai} for any given  
$H(q_1 , q_2)$ there exist $s, s' \geq 0$ such
that $H(q_1 + s, q_2)$ and $H(q_1 , q_2 + s')$ would be unstable.  

All properties mentioned above are established rigorously, 
except the fact that $\mathfrak{q}_{1,2} \neq 0$. 
Utilizing the analysis in Refs.~\onlinecite{jmpold,carsten} one can prove that $H(q_1, 1)$
is unstable if both of the following inequalities are fulfilled 
\begin{gather}
 q_1^2 < \frac 3{16} \frac{\mu_{23}}{\mu} , \label{09.07;1}\\
\frac{6 \mu}{\mu_{23}} q_1 \left\{1 + \frac{4q_1 \sqrt{\mu}}{\sqrt{3 \mu_{23}} -
4 q_1 \sqrt{\mu}}\right\} < 1 . \label{09.07;2}
\end{gather}
Note, that from (\ref{09.07;1}) it automatically follows that $(q_1 , 1)$ lies
in the upper sector. From (\ref{09.07;1})--(\ref{09.07;2}) it follows that for
$q_1$ small enough 
$H(q_1, 1)$ is unstable, thus $\mathfrak{q}_1 \neq 0$ and
(\ref{09.07;1})--(\ref{09.07;2}) can be used to derive the lower bound on
$\mathfrak{q}_1$. Similarly, by 
interchanging the indices $1 \leftrightarrow 2$ in
(\ref{09.07;1})--(\ref{09.07;2}) one can get the lower bound on $\mathfrak{q}_2
\neq 0$.

\begin{figure}
\begin{center}
\includegraphics[height=.3\textheight]{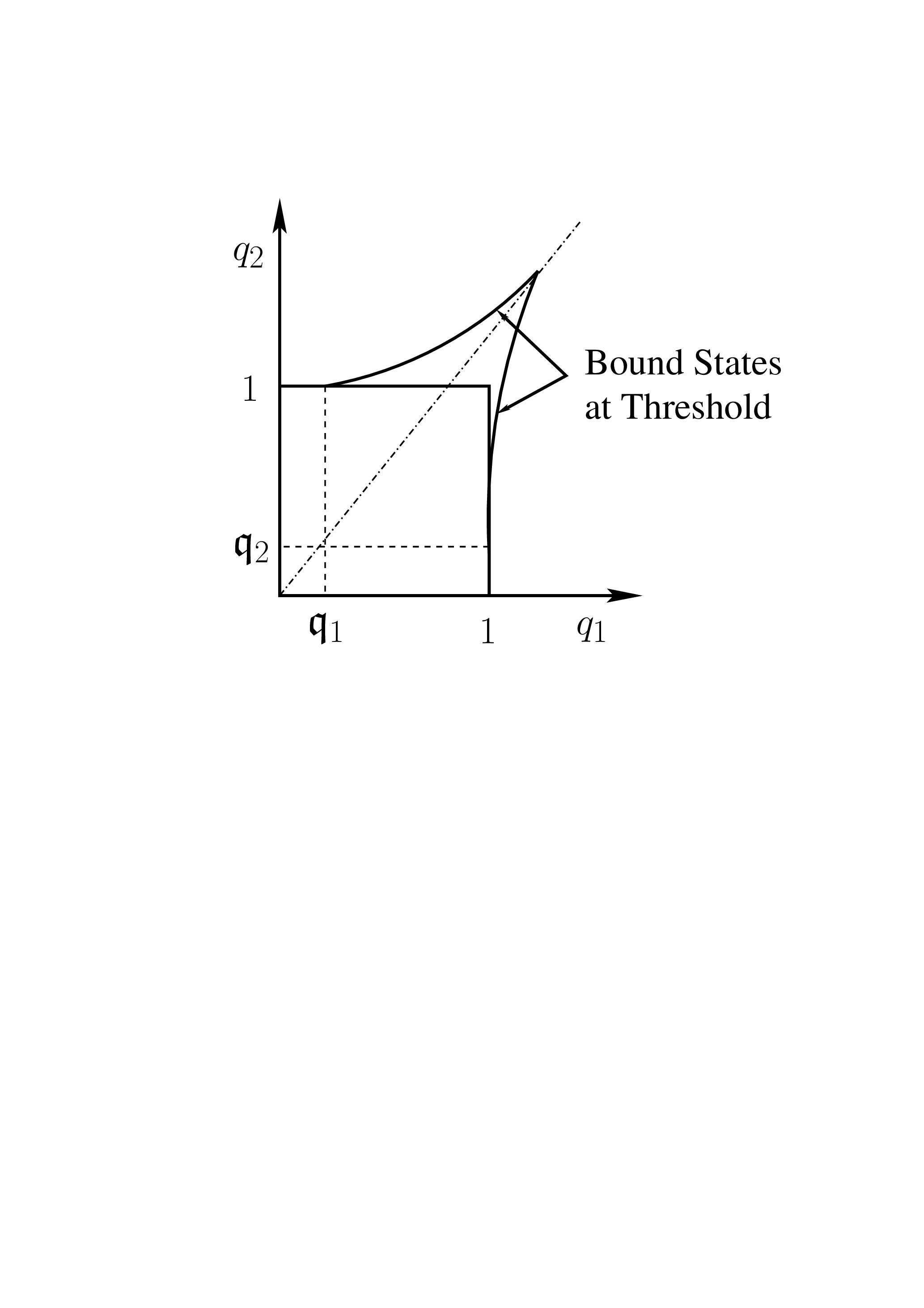}
\caption{The sketch of stability diagram for three Coulomb charges $\{ q_1 ,
q_2 , -1\} $. Systems on the dash--dotted line have equal dissociation
thresholds. The area confined by the unit square and two joint arcs represents
stable systems. On the arcs
of stability curve where either $q_1> 1$ or $q_2 > 1$ there are bound states at
threshold. See also the discussion in Ref.~\onlinecite{3}.}
\end{center}
\end{figure}

Using the results from the previous sections we can prove (see also the
discussion in Sec.~5 in Ref.~\onlinecite{3})
\begin{theorem}\label{th:martin}
Suppose $({\bf q}_1 , {\bf q}_2)$ lies on the stability border in the upper
(resp. lower) sector. (a) If ${\bf q}_2 > 1$ (resp. ${\bf q}_1 > 1$)
then $H({\bf q}_1 , {\bf q}_2)$ has a bound state at threshold.
(b) If ${\bf q}_1 < \mathfrak{q}_1$ (resp. ${\bf q}_2 < \mathfrak{q}_2$) then
$H({\bf q}_1 , {\bf q}_2)$ has no bound states at threshold.
\end{theorem}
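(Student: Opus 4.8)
The plan is to reduce both parts to the apparatus of Theorem~\ref{th:main}. For the three charges $\{q_1,q_2,-1\}$ there are exactly three cluster decompositions, $\{23\}+1$, $\{13\}+2$ and $\{12\}+3$, whose products of net charges, defined as in (\ref{netcharge}), are $q_1(q_2-1)$, $q_2(q_1-1)$ and $-(q_1+q_2)$ respectively. In the open upper sector $\mu_{23}q_2^2>\mu_{13}q_1^2$ the pair $\{23\}$ (charges $q_2,-1$, reduced mass $\mu_{23}$) carries the deepest two-body bound state, so the lowest threshold is set by the single decomposition $a=\{23\}+1$, i.e. $\mathfrak{N}=1$, while the other two decompositions sit strictly above $E_{thr}$ and will supply the gap required by R3. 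The sign of $Q_0$ in (\ref{Q0new}) is therefore governed by $q_1(q_2-1)$, and this is exactly what separates parts (a) and (b).

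For part (a) I would fix $\lambda_{cr}=({\bf q}_1,{\bf q}_2)$ (taken, for definiteness, in the open upper sector; the cusp on the equal-threshold line is a single limiting point) and choose the sequence $\lambda_n$ inside the stability region, converging to the border so that $E(\lambda_n)=\inf\sigma\,H(\lambda_n)<E_{thr}(\lambda_n)$. Since $E_{thr}(q_1,q_2)=-\tfrac12\mu_{23}q_2^2$ is continuous in the charges, $\inf\sigma\,H$ is continuous as well, and $\inf\sigma\,H=E_{thr}$ holds exactly on the border, one obtains $E(\lambda_n),E_{thr}(\lambda_n)\to E_{thr}(\lambda_{cr})$, which is R1. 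R2 is immediate because the potentials in (\ref{ap1}) are pure Coulomb ($U_{ij}\equiv0$, so $\tilde U\equiv0$) with bounded charges and coefficients depending continuously on $\lambda$. R3 follows from the hydrogenic spectrum of $\{23\}$: the excited levels lie a fixed distance above the ground level, giving (\ref{need1}), and in the open upper sector the decompositions $\{13\}+2$ and $\{12\}+3$ lie strictly above $E_{thr}$, giving (\ref{need2}). R4 holds since the $\{23\}$ ground state decays exponentially, uniformly for charges near $\lambda_{cr}$. Finally $Q_0=q_1(q_2-1)>0$ precisely because ${\bf q}_2>1$, so Theorem~\ref{th:main} applies and yields the normalized threshold eigenstate $\psi_{cr}$.

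For part (b) the first step is to locate the border: on the line $q_2=1$ the segment $0<q_1<\mathfrak{q}_1$ is unstable while the square beneath it is stable, so for ${\bf q}_1<\mathfrak{q}_1$ the critical point is $({\bf q}_1,1)$, the bound cluster $\{23\}$ is exactly neutral, and $Q_0=0$; hence the hypothesis of Theorem~\ref{th:main} fails. To show that $E_{thr}$ is then not an eigenvalue I would argue by contradiction: assume $H({\bf q}_1,1)\psi=E_{thr}\psi$ with $\psi\in L^2$. Projecting onto the $\{23\}$ ground state $\varphi_0(\xi)$ and writing $f(R)=(\varphi_0,\psi)_\xi$, the neutrality of $\{23\}$ cancels the monopole and, since $\varphi_0$ is spherically symmetric, the dipole term as well, so $f$ solves a zero-energy equation in $R$ with a short-range effective potential and thus has the asymptotics $f(R)\sim b\,|R|^{-1}$. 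The charge ${\bf q}_1$ sits at the onset of binding, approached from $q_2<1$ by genuinely bound states whose energies rise to threshold; in three dimensions such an onset in a Coulomb-free channel is a zero-energy resonance, i.e. $b\neq0$, whence $f\notin L^2(\mathbb{R}^3)$ and $\psi\notin L^2$, a contradiction. This is exactly the two-body Coulomb-free dichotomy of Refs.~\onlinecite{3,gest}, and the lower sector follows by the symmetry $1\leftrightarrow2$.

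The main obstacle in part (a) is the conjunction of R1 and the uniformity in R4: one must know that the ground-state energy truly attains the threshold along an approach from the stable side and that the cluster eigenfunctions decay exponentially uniformly in $\lambda$ near $\lambda_{cr}$, both of which rest on continuity of the discrete and essential spectra under variation of the charges. The harder obstacle is in part (b), namely making rigorous the statement that the critical channel function carries a genuine $b\,|R|^{-1}$ tail with $b\neq0$ rather than accidental faster decay. This is where the criticality (onset of binding) and the exact neutrality $q_2=1$ must be used together, while simultaneously controlling the coupling of the lowest channel to the excited $\{23\}$ states and to the other two decompositions, so that the reduction to an effective short-range problem in $R$ is legitimate and the non-integrability of $f$ transfers to $\psi$.
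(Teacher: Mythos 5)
Your part (a) is essentially the paper's proof: the paper likewise takes a sequence of stable Hamiltonians $H\bigl(q_1(\lambda_n), q_2(\lambda_n)\bigr)$ converging to $({\bf q}_1,{\bf q}_2)$ from inside the stability region, takes their normalized ground states as the $\psi_n$ of R1, and verifies R1--R4 together with $Q_0 = {\bf q}_1({\bf q}_2-1) > 0$, checking R4 via the exactly known hydrogenic cluster wave functions; your bookkeeping of the three cluster decompositions and of $\mathfrak{N}$ only fills in details the paper leaves as ``straightforward.''

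Part (b), however, contains a genuine gap, which you yourself flag. Two steps are asserted rather than proved: (i) the reduction of a putative threshold eigenfunction to a one-channel zero-energy problem with short-range effective potential --- this requires controlling the coupling of the $\{23\}$ ground-state channel to the excited and continuum channels (e.g.\ by a Feshbach-type argument), and the reduced effective potential then generically carries an attractive polarization tail $\sim -c|R|^{-4}$, which is delicate precisely at zero energy; and (ii) the claim that criticality forces the coefficient $b \neq 0$. Point (ii) is exactly the hard dichotomy: an onset of binding could a priori coexist with a genuine zero-energy \emph{bound state} whose channel component decays faster than $|R|^{-1}$ (as happens, e.g., in higher partial waves for short-range potentials), and the two-body references you invoke do not cover this multichannel three-body situation; making it rigorous is a project comparable to Ref.~\onlinecite{l2osten}. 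The paper avoids all of this with a short variational argument: writing $H(q_1,1) = H_{thr} - \frac{1}{2\mu}\Delta_R + W(q_1)$, where $W$ is homogeneous of degree one in $q_1$, a normalized threshold eigenfunction $\phi$ would satisfy $(\phi, W({\bf q}_1)\phi) < 0$, because $H_{thr} \geq E_0 = E_{thr}$ and $-(\phi,\Delta_R\phi) > 0$; since in the upper sector the threshold $E_0$ does not depend on $q_1$, the same $\phi$ is then a trial function proving that $H({\bf q}_1+\varepsilon,1)$ with $\varepsilon = (\mathfrak{q}_1-{\bf q}_1)/2$ is stable, contradicting the known instability of the whole segment $\{(q_1,1):\, 0 < q_1 \leq \mathfrak{q}_1\}$. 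You should replace your resonance-asymptotics program for (b) by this monotonicity-in-$q_1$ argument, or else supply full proofs of (i) and (ii), which would be substantially harder.
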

\begin{proof}
Let us prove (a). In the vicinity of $({\bf q}_1 , {\bf q}_2)$ one takes a
sequence $(q_{1,2} (\lambda_n) \to {\bf q}_{1,2})$ so that $H\bigl(q_1
(\lambda_n) , q_2 (\lambda_n)\bigr)$ is stable.
For the sequence $\psi_n$ in R1 we take the normalized ground states of
$H\bigl(q_1 (\lambda_n) , q_2 (\lambda_n)\bigr)$. It is straightforward to check
that
all conditions of Theorem~\ref{th:main} can be satisfied. (The requirement R4
can be easily
checked since the exact expressions for the ground state wave functions of the
particle pairs
$\{1,3\}$ and $\{2,3\}$ are known).

Now let us prove (b). Suppose that $({\bf q}_1 , 1)$ lies on the stability
border  in the upper sector  and ${\bf q}_1 < \mathfrak{q}_1$. Assume by
contradiction that there is a normalized $\phi \in D(H)$ such that $H({\bf q}_1
, 1) \phi = E_{thr} ({\bf q}_1 , 1) \phi$
Let us rewrite (\ref{ap1}) for $q_2 = 1$ as
\begin{gather}
     H(q_1, 1) = H_{thr} -  \frac 1{2 \mu} \Delta_R   + W  , \label{cfdgrgrg}\\
     H_{thr} := -\frac1{2 \mu_{23}} \Delta_\xi - \frac 1{|\xi |} , \\
W(q_1) := - \frac{q_1}{|(1-s)\xi - R |} + \frac{q_1 }{|a\xi +R|} . 
\end{gather}
In the upper sector $E_{thr} (q_1, 1) = E_0$ is constant and $H_{thr} \geq E_0$.
Using that $-(\phi, \Delta_R , \phi) > 0$ we get from (\ref{cfdgrgrg})
that $(\phi, W({\bf q}_1)\phi) < 0$. Thus $(\phi, H ({\bf q}_1 + \varepsilon ,
1)\phi) < E_{thr}$, where $\varepsilon = ( \mathfrak{q}_1 - {\bf q}_1)/2 $.
Therefore,
$H ({\bf q}_1 + \varepsilon , 1) = H (\mathfrak{q}_1 - \varepsilon , 1)$ is
stable, which contradicts the properties of the stability border.
\end{proof}
\begin{remark}
Suppose the conditions of Theorem~\ref{th:martin} are fulfilled. From Fig.~1 one
can see that it is possible to construct a sequence of 
points, which correspond to stable Hamiltonians and converge to $({\bf q}_1 , {\bf q}_2)$ (in the topology of $\mathbb{R}^2$). In the case (a) of
Theorem~\ref{th:martin} the 
ground states of these Hamiltonians would form a sequence that does not spread.
In the case (b) the ground states would form a totally spreading sequence \cite{1}. Case (b) bears some similarity to the proof\cite{l2osten} of the 
absence of an $L^2$ -eigenfunction at the bottom of the spectrum of the Hamiltonian of the hydrogen negative ion in the triplet S-sector. 
\end{remark}

\subsection{Negative Atomic Ions}\label{sec:5.2}

We consider the Hamiltonian of an atomic nucleus with charge $Z$ and $N_e$ electrons
\begin{gather}
    H(Z, N_e) = H_0 - \sum_{i=1}^{N_e} \frac Z{|r_i|} + \sum_{1 \leq i <j \leq
N_e}\frac 1{|r_i - r_j|} , \label{atham}\\
H_0 = -\sum_{i=1}^{N_e} \Delta_i - \frac 1M \sum_{1 \leq i <j \leq
N_e} \nabla_i \cdot \nabla_j ,
\end{gather}
where the coordinate $r_i $ points from the nucleus to the electron $i$.
The total number of particles is $N_e +1$ (the electrons are numbered from 1 to
$N_e$
and the nucleus is the particle number $N_e+1$). We set $\hbar = 1$, $m_i = 1$,
$m_{N_e +1} = M$. In the notations of (\ref{xc31})--(\ref{:xc31}) $\lambda = Z$
is the continuous
parameter, $q_i (Z) = -1$ for $i = 1, \ldots, N_e$ and $q_{N_e +1} = Z$. 
The electrons are treated as spin $1/2$ fermions, the spin of the nucleus is set
to zero. 
By $\mathcal{P}_{N_e}$ we shall denote the
projection operator on the subspace of functions, which are antisymmetric with
respect to the interchange of electrons' spin and spatial coordinates. 
The Hamiltonian (\ref{atham}) acts in $L^2 (\mathbb{R}^{3N_e};
\mathbb{C}^{2^{N_e}})$ 
and is a self--adjoint operator with domain $D(H_0) = \mathcal{H}^2 (\mathbb{R}^{3N_e};
\mathbb{C}^{2^{N_e}})$. 
We define 
\begin{equation}
 E(Z, N_e) := \inf \sigma \bigl( H(Z, N_e) \mathcal{P}_{N_e} \bigr) . 
\end{equation}
The nuclear charge $Z_{cr}$ is called \textit{critical} if $E(Z_{cr}, N_e) =
E(Z_{cr}, N_e - 1)$ and $E(Z, N_e) < E(Z, N_e - 1)$ for $Z > Z_{cr}$. It is
known \cite{zhislin} that $Z_{cr} \leq N_e -1$ (due to the long--range
attraction
between the outer electron and remaining particles).
For a rigorous proof on existence of the critical charge see Refs.~\onlinecite{ruskai,sigalsays1,sigalsays2}.
Lieb \cite{lieby1} showed that $Z_{cr} \geq N_e /2$, and in Ref.~\onlinecite{lieby2} one
finds the proof that $Z_{cr} /N_e \to 1$ if $N_e \to \infty$ (here one also
assumes that the nucleus is infinitely heavy). It is generally conjectured that
$Z_{cr} \in (N_e - 2 , N_e -1]$ throughout the periodic system, see, in
particular, Refs.~\onlinecite{hogreve,simonproblems}. Some experimental and theoretically
estimated values of critical charge can be found   in Ref.~\onlinecite{kais}.
Here we prove
\begin{theorem}\label{th:phys}
Suppose that $Z_{cr} \in (N_e - 2 , N_e -1)$. Then there exists $\psi_0 \in
D(H_0)$ , $\|\psi_0 \| = 1$, such that
$H(Z_{cr}, N_e)\psi_0 = E(Z_{cr}, N_e - 1) \psi_0$ and $\mathcal{P}_{N_e} \psi_0
= \psi_0$.
\end{theorem}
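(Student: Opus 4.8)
The plan is to deduce Theorem~\ref{th:phys} directly from Theorem~\ref{th:main} by identifying the continuous parameter $\lambda = Z$ and the critical value $\lambda_{cr} = Z_{cr}$. Since $H(Z,N_e)$ is stable for $Z > Z_{cr}$ (by the very definition of the critical charge, $E(Z,N_e) < E(Z,N_e-1)$), I would pick a sequence $Z_n \searrow Z_{cr}$ with $Z_n > Z_{cr}$ and take $\psi_n$ to be the normalized ground states of $H(Z_n,N_e)\mathcal{P}_{N_e}$, which exist as genuine eigenstates below the continuum. Continuity of $Z \mapsto E(Z,N_e)$ and $Z \mapsto E(Z,N_e-1)$ together with the defining relation $E(Z_{cr},N_e) = E(Z_{cr},N_e-1)$ then yields $E(Z_n,N_e) \to E_{thr}(Z_{cr})$, which establishes R1; here the full symmetry projection $\mathcal{P}$ coincides with $\mathcal{P}_{N_e}$ because the electrons are the only identical particles and the nucleus is distinguishable.

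Next I would pin down the cluster structure. By the HVZ theorem the bottom of the essential spectrum equals the smallest two--cluster threshold; since a detached cluster of electrons contributes nonnegative intrinsic energy and $E(Z,\cdot)$ decreases as electrons are bound, the lowest threshold is the single--electron removal $E_{thr}(Z) = E(Z,N_e-1)$. It is realized by the $\mathfrak{N} = N_e$ partitions $a$ with $\mathfrak{C}^a_1 = \{\text{nucleus}+(N_e-1)\text{ electrons}\}$ and $\mathfrak{C}^a_2 = \{\text{one electron}\}$, which are permuted into one another by the electron antisymmetry. Because the model has no short--range part ($\tilde U \equiv 0$) and the charges $q_i(Z)$ remain bounded with the Coulomb terms depending continuously on $Z$, the condition R2 holds trivially.

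The spectral requirements R3 and R4 are where the hypothesis $Z_{cr} > N_e-2$ enters. The analogue of the Zhislin bound for $N_e-1$ electrons gives $Z_{cr}(N_e-1) \leq N_e-2 < Z_{cr}$, so at every $Z = Z_n, Z_{cr}$ the residual $(N_e-1)$--electron ion is strictly stable, i.e. $E(Z,N_e-1) < E(Z,N_e-2)$. This strict inequality is precisely the gap $|\Delta\epsilon|$ separating the cluster ground--state energy $E_{thr}(Z)$ from $\inf\sigma_{ess}\bigl(H^a_{thr}\mathcal{P}^a\bigr)$ required by (\ref{need1}), while the remaining partitions lie strictly above threshold and supply (\ref{need2}). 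The same strict gap makes the threshold eigenstate of the $(N_e-1)$--ion an isolated, exponentially localized bound state, so a Combes--Thomas/Agmon estimate furnishes the uniform bound $\bigl\|e^{\beta|x^a|}P^a_{thr}(Z)\bigr\| \leq A$ of R4.

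Finally I would verify the decisive charge condition (\ref{Q0new}). For the lowest--threshold partitions the net charges factor, $Q^a(Z) = \bigl(Z-(N_e-1)\bigr)\cdot(-1) = N_e-1-Z$, so the hypothesis $Z_{cr} < N_e-1$ forces $Q^a(Z_{cr}) = N_e-1-Z_{cr} > 0$, and for $Z_n$ close enough to $Z_{cr}$ one still has $Z_n < N_e-1$ and hence $Q_0 = \inf_a\inf_\lambda Q^a > 0$: both fragments carry net negative charge and Coulomb--repel. With R1--R4 and $Q_0 > 0$ in hand, Theorem~\ref{th:main} delivers $\psi_{cr} =: \psi_0 \in D(H_0)$ with $\|\psi_0\| = 1$, $H(Z_{cr},N_e)\psi_0 = E_{thr}(Z_{cr})\psi_0 = E(Z_{cr},N_e-1)\psi_0$ and $\mathcal{P}_{N_e}\psi_0 = \psi_0$, which is the assertion. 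I expect the main obstacle to be the rigorous verification of R3--R4 rather than the clean sign check $Q_0 > 0$: one must prove that the lowest threshold is genuinely the isolated, exponentially bound $(N_e-1)$--electron ground state with a uniform gap, and this is exactly what the window $Z_{cr} \in (N_e-2,N_e-1)$ is engineered to guarantee.
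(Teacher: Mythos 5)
Your proposal is correct and follows essentially the same route as the paper: the same sequence $Z_n \searrow Z_{cr}$ of normalized ground states, the same identification of the $\mathfrak{N}=N_e$ single--electron--removal partitions via HVZ, the same Zhislin--based gap argument (from $Z_{cr} > N_e-2$) to secure R3, and the same sign check $Q^a = N_e - 1 - Z > 0$ (from $Z_{cr} < N_e - 1$) for condition (\ref{Q0new}). The only divergence is R4: where you invoke a Combes--Thomas/Agmon estimate as a black box, the paper proves a dedicated lemma (Ahlrichs' moment method generalized to a nucleus of finite mass, in which the Hughes--Eckart cross terms $-\frac1M \nabla_i\cdot\nabla_j$ are shown explicitly to drop out) so that the exponential bound comes with explicit constants, uniform along the sequence $Z_n, Z_{cr}$, which is exactly the uniformity R4 demands.
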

\begin{proof}
We need to show that the conditions of Theorem~1 are fulfilled. Since $Z_{cr} >
N_e - 2 $ by assumption, it is known \cite{zhislin} that there exists
$\varepsilon > 0$
such that $E(Z_{cr}, N_e - i) < E(Z_{cr}, N_e -i - 1) - \varepsilon$ for $i = 1,
\ldots, N_e -1$. Due to the continuous dependence of the energies on $Z$ there
exist
$z_0 > 0$ and
$|\Delta \epsilon| \in (0, 2  \varepsilon)$ 
such that for all $Z_n = Z_{cr} + z_0 /n $, where $n = 1,2,\ldots$ one has $Z_n
\in (N_e -2 , N_e -1)$ and 
\begin{equation}\label{qqma}
 E(Z_n, N_e - i) < E(Z_n, N_e -i - 1) - 2  |\Delta \epsilon| \quad \quad (i = 1,
\ldots, N_e -1).
\end{equation}
The requirement R1 is fulfilled if for $\psi_n$ we choose the normalized ground
state of $H(Z_n , N_e)$ that is $H(Z_n , N_e) \psi_n = E(Z_n , N_e) \psi_n$,
where, clearly,
$\mathcal{P}_{N_e} \psi_n = \psi_n$. By the HVZ theorem $E_{thr} (Z) = E(Z, N_e
- 1)$ for $Z= Z_n , Z_{cr}$ and $\psi_n$ exists. R2 is obvious.
The partitions $a = 1, 2, \ldots, \mathfrak{N}$, where $\mathfrak{N} = N_e$,
correspond
to dividing all particles into the electron number $a$ and the rest 
particles. R3 follows from (\ref{qqma}) and the HVZ theorem.
Inequality
(\ref{Q0new}) holds. R4 follows from Lemma~\ref{lem:ahlww}.
\end{proof}
The following Lemma is essentially the result of Ahlrichs \cite{ahlrichs}
generalized to the nucleus of finite mass (see also Ref.~\onlinecite{osten} for a short and
clear exposition).
\begin{lemma}\label{lem:ahlww}
Suppose $H(Z , N_e) \psi = \mathcal{E}\psi$, where $\psi \in D(H_0)$,
$\|\psi\|=1$,
$\mathcal{P}_{N_e} \psi = \psi$ and $\mathcal{E} < E(Z, N_e-1) - |\Delta
\epsilon|$ for some $|\Delta \epsilon| >0$. Then
$\| e^{(4CN_e)^{-1}|r|} \psi\| \leq \sqrt 2$, where $|r| := \sum_i |r_i|$ and
\begin{equation}\label{ahlr5}
    C := \frac{Z}{2|\Delta \epsilon|} + \frac{1}{2|\Delta \epsilon|} \left(
Z^2 + 2|\Delta \epsilon| \right)^{1/2} . 
\end{equation}
\end{lemma}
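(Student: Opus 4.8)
The plan is to adapt the Ahlrichs exponential--bound argument (in the streamlined form of Ref.~\onlinecite{osten}) to the finite--mass Hamiltonian (\ref{atham}), the only genuinely new feature being the mass--polarization cross terms hidden in $H_0$. Write $\gamma:=(4CN_e)^{-1}$, $F:=\gamma|r|$ and $u:=e^{F}\psi$; the target is $\|u\|\le\sqrt2$. Since I may not assume a priori that $e^{F}\psi$ is square integrable, I would first replace $F$ by the bounded, globally Lipschitz cutoff $F_R:=\gamma\min(|r|,R)$, so that $u_R:=e^{F_R}\psi\in D(H_0)$ and every integration by parts below is legitimate, with $|\nabla_i F_R|\le\gamma$ pointwise. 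I would derive a bound on $\|u_R\|$ uniform in $R$ and then recover the claim by monotone convergence as $R\to\infty$.

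Next I would establish the weighted energy identity. Multiplying $H(Z,N_e)\psi=\mathcal E\psi$ by $e^{2F_R}\bar\psi$, integrating, summing over spins and taking the real part makes the right--hand side $\mathcal E\|u_R\|^2$; integrating the kinetic part by parts and re--expressing everything through $u_R$ gives
\[
\bigl\langle u_R,(H(Z,N_e)-\mathcal E)u_R\bigr\rangle=\int W_{F_R}\,|u_R|^2,\qquad W_{F_R}:=\sum_i|\nabla_i F_R|^2+\frac1M\sum_{i<j}\nabla_i F_R\cdot\nabla_j F_R,
\]
where $W_{F_R}$ is the squared gradient of the weight in the metric set by $H_0$ and the second sum is exactly the finite--mass contribution. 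Since $|\nabla_i F_R|\le\gamma$ one has $|W_{F_R}|\le\gamma^2\bigl(N_e+\tfrac{1}{2M}N_e(N_e-1)\bigr)$.

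The attractive Coulomb term must be routed through a one--electron reduction, since ``the potential vanishes at infinity'' is false here. For each electron $i$,
\[
H(Z,N_e)=H(Z,N_e-1)^{(\hat i)}+\Bigl(-\Delta_i-\frac{Z}{|r_i|}+\sum_{j\neq i}\frac{1}{|r_{ij}|}-\frac1M\sum_{j\neq i}\nabla_i\cdot\nabla_j\Bigr),
\]
and for a.e.\ fixed $r_i$ the function $u_R$ is antisymmetric in the remaining electrons, so $\langle u_R,H(Z,N_e-1)^{(\hat i)}u_R\rangle\ge E(Z,N_e-1)\|u_R\|^2$. Dropping the nonnegative repulsion, using $\mathcal E\le E(Z,N_e-1)-|\Delta\epsilon|$, and exploiting that antisymmetry makes the expectation of $-\Delta_i-Z/|r_i|$ identical for every $i$, the identity above collapses to a single hydrogenic inequality $\langle u_R,(-\Delta_i-\tfrac{Z}{|r_i|})u_R\rangle\le(-|\Delta\epsilon|+\Gamma^2)\|u_R\|^2$, where $\Gamma^2$ bounds $W_{F_R}$ and absorbs the $O(1/M)$ mass--polarization cross term, which has the same order and homogeneity as terms already controlled.

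Finally I would localize in $|r|$ at a radius $R_0$ and bound the exterior piece. Combining the hydrogenic inequality with the radial curvature of the weight ($\Delta_i|r_i|=2/|r_i|$, feeding a term $-2\gamma/|r_i|$ alongside $-Z/|r_i|$) and a Hardy inequality for $-\Delta_i$, and then optimizing $R_0$, produces exactly the relation $2|\Delta\epsilon|C^2-2ZC-1=0$, i.e.\ the value of $C$ in (\ref{ahlr5}) (equivalently $8N_e^2\gamma^2+4N_eZ\gamma=|\Delta\epsilon|$). With this tuning the exterior mass does not exceed the interior mass, so a Chebyshev estimate gives $\int_{|r|>R_0}e^{2F_R}|\psi|^2\le\int_{|r|\le R_0}e^{2F_R}|\psi|^2$, hence $\|u_R\|^2\le2\|\psi\|^2=2$ uniformly in $R$, and letting $R\to\infty$ yields $\|e^{\gamma|r|}\psi\|\le\sqrt2$. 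I expect the main obstacle to be keeping the whole estimate non--circular --- every bound must hold for the truncated weight with constants independent of $R$ --- together with tuning $\gamma$ small enough (this is precisely the quadratic defining $C$) that the weight's own kinetic cost $W_{F_R}$ plus the residual Coulomb tail of the outermost electron stay strictly below the gap $|\Delta\epsilon|$; the finite nuclear mass enters only through the explicit $O(1/M)$ mass--polarization terms, and the new but routine point relative to Ref.~\onlinecite{ahlrichs} is to check that they do not upset this balance.
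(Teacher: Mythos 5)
Your strategy is genuinely different from the paper's: you attempt a direct Agmon--type estimate with the totally symmetric weight $F=\gamma\sum_i|r_i|$, whereas the paper (following Ahlrichs and Ref.~\onlinecite{osten}) uses weights $f(r_1)$ depending on a \emph{single} electron coordinate to derive the moment recursion $(\psi,|r_1|^{n+1}\psi)\le(n+1)C\,(\psi,|r_1|^n\psi)$, whence $(\psi,|r|^n\psi)\le(CN_e)^n\,n!$ and the exponential bound follows by summing the series $\sum_n(2\beta CN_e)^n=2$ at $\beta=(4CN_e)^{-1}$. Unfortunately your version breaks down at precisely the point that is the paper's one new ingredient relative to the infinite--mass case: the mass--polarization terms. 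In your one--electron reduction the leftover operator is $-\Delta_i-Z/|r_i|+\sum_{j\ne i}|r_{ij}|^{-1}-\frac1M\sum_{j\ne i}\nabla_i\cdot\nabla_j$, and the expectation of the last piece in the state $u_R$ is $\frac1M\sum_{j\ne i}\mathrm{Re}\,\langle\nabla_i u_R,\nabla_j u_R\rangle$. This is a kinetic quantity, of the same size as $\|\nabla u_R\|^2$; it is \emph{not} of order $\gamma^2\|u_R\|^2$ and cannot be ``absorbed'' into $\Gamma^2$ --- you are conflating gradients of the weight $F_R$ (bounded by $\gamma$), which do enter $W_{F_R}$, with gradients of $u_R$ itself, which are not controlled by $\|u_R\|$. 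Nor can you absorb it into the single $-\Delta_i$ you retained, since Cauchy--Schwarz would require control of $\|\nabla_j u_R\|$ for every $j\ne i$ as well. The paper kills exactly this term by letting the weight depend only on $r_1$: then $[\nabla_1\cdot\nabla_j,f]=(\nabla_1 f)\cdot\nabla_j$, and $(f\psi,(\nabla_1 f)\cdot\nabla_j\psi)$ is purely imaginary (integrate by parts in $r_j$, using that $f,\nabla_1 f$ do not depend on $r_j$), so these terms drop identically from the real identity (\ref{hermit}) and everything reduces to the infinite--mass inequality (2.6) of Ref.~\onlinecite{osten}. Your symmetric weight destroys this mechanism, and your sketch offers no substitute for it.

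A second, independent problem is the endgame. Even granting your hydrogenic inequality, the claimed ``Chebyshev'' step cannot produce the stated constant: the interior integral carries the weight $e^{2F_R}\le e^{2\gamma R_0}$, so ``exterior $\le$ interior'' yields at best $\|u_R\|^2\le 2e^{2\gamma R_0}$, not $2$; and the assertion that optimizing $R_0$ ``produces exactly'' the quadratic $2|\Delta\epsilon|C^2-2ZC-1=0$ is reverse--engineered from (\ref{ahlr5}) rather than derived --- nothing in the sketched combination of Hardy's inequality and the curvature term $\Delta_i|r_i|=2/|r_i|$ actually generates it. In the paper both constants come out automatically: $C$ from the Hoffmann--Ostenhof moment bound (\ref{ahlr3}) (inequality (2.20) of Ref.~\onlinecite{osten}), and the factor $\sqrt2$ from $\|e^{\beta|r|}\psi\|^2=\sum_n(2\beta)^n(n!)^{-1}(\psi,|r|^n\psi)\le\sum_n2^{-n}=2$. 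If you want to salvage a direct exponential--weight proof, you would at minimum have to take the weight to depend on one electron coordinate only (restoring the cancellation above) and replace the interior/exterior comparison by an honest localization argument; as written, the proposal does not prove the lemma.
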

\begin{proof}
Let $\mathcal{P}'_{N_e}$ denote the projection operator on the subspace of
functions, which are antisymmetric with respect 
to the interchange of spin and spatial coordinates of the electrons $\{2, \ldots
, N_e\}$. Looking at (\ref{atham}) it is easy to
see that
\begin{equation}\label{bseedde}
 H(Z , N_e) \mathcal{P}'_{N_e} + \frac{Z}{|r_1|} \mathcal{P}'_{N_e} \geq E(Z
,N_e -1) \mathcal{P}'_{N_e}
\end{equation}
since in the Hamiltonian on the lhs the first electron is involved only in
positive
interaction terms. From (\ref{bseedde}) it follows that
\begin{equation}
 \Bigl( g, \Bigl\{[ H(Z , N_e) - \mathcal{E}] + Z |r_1|^{-1} \Bigr\} g \Bigr) 
\geq
|\Delta \epsilon| (g,g) , 
\end{equation}
where  $\mathcal{P}'_{N_e} g = g$. Setting $g = f(r_1) \psi(r_1, \ldots,
r_{N_e}, \sigma_1, \ldots, \sigma_{N_e})$ we get
\begin{equation}\label{hermiti}
 \bigl( f \psi, [ H(Z , N_e) - \mathcal{E}] f \psi\bigr) + Z (\psi, |f|^2
|r_1|^{-1} \psi)
\geq  |\Delta \epsilon| (\psi,|f|^2\psi) . 
\end{equation}
Using hermiticity of $i \nabla_j$ one shows \cite{osten} that
\begin{gather}
 \bigl( f \psi, [ H(Z , N_e) - \mathcal{E}] f \psi\bigr) \nonumber \\ 
= \bigl(f\psi, [- \frac
12\Delta_1 , f]
\psi \bigr) - \frac 1M \sum_{j =2}^{N_e} \bigl(f\psi, [\nabla_1 \cdot \nabla_j ,
f] \psi
\bigr)   = \frac 12 \bigl( \psi, |\nabla_1 f|^2 \psi\bigr)  \label{hermit}
\end{gather}
because $ \bigl(f\psi, [\nabla_1 \cdot \nabla_j , f] \psi
\bigr) = 0$ for all $j \geq 2$. Substituting (\ref{hermit}) into (\ref{hermiti})
we produce exactly the inequality (2.6) from Ref.~\onlinecite{osten}. 	
So we can use the inequality (2.20) from Ref.~\onlinecite{osten}, which in our notations
reads
\begin{equation}\label{ahlr3}
    \frac{(\psi, |r_1|^{n+1} \psi)}{(\psi, |r_1
|^n \psi)} \leq \frac 1{2|\Delta \epsilon|} \Bigl\{ Z + \bigl[Z^2 +
\frac{|\Delta \epsilon| }2(n+2)^2 \bigr]^{1/2} \Bigr\} ,
\end{equation}
This can be transformed
into
\begin{equation}\label{ahlr4}
    (\psi, |r_1 |^{n+1} \psi) \leq (n+1)C(\psi, |r_1 |^n \psi) , 
\end{equation}
where $C$ is defined in (\ref{ahlr5}). Since $\|\psi\| = 1$ (\ref{ahlr4})
results in
\begin{equation}\label{ahlr6}
 (\psi, |r_i |^n \psi) \leq C^n n! \quad \quad (i = 1, \ldots, N_e).
\end{equation}
Now using \cite{ahlrichs}
\begin{equation}
 |r|^n \equiv \Bigl( \sum_{i=1}^{N_e} |r_i| \Bigr)^n \leq (N_e)^{n-1}
\sum_{i=1}^{N_e} |r_i|^n
\end{equation}
together with (\ref{ahlr6}) we obtain
\begin{equation}\label{prolem}
 (\psi, |r|^n  \psi) \leq  (C N_e)^n n!
\end{equation}
Using that $\|e^{\beta |r|} \psi\|^2 = \sum_n (2\beta)^n (n!)^{-1} (\psi, |r|^n
\psi)$ and (\ref{prolem}) we prove the Lemma.
\end{proof}

A few remarks are in order. For $N_e = 2$ the statement of Theorem~\ref{th:phys}
was
conjectured in Ref.~\onlinecite{stil} (see also \cite{baker}) and proved in
Ref.~\onlinecite{ostenhof}. For $N_e \geq 3$ this result 
was conjectured in Ref.~\onlinecite{hogreve}. The
restriction $Z_{cr} \in (N_e - 2 , N_e -1)$ in the condition of
Theorem~\ref{th:phys}
is imposed in order to keep the proof completely rigorous (otherwise to apply
Theorem~\ref{th:main} one would need additional assumptions concerning the
nature of dissociation thresholds). In fact, the same result must hold for
$Z_{cr} < N_e -2$.

\appendix

\section{Criteria for Non--Spreading Sequences}\label{sec:6}

Recall \cite{1} that the sequence of functions $f_n
(x) \in L^2
(\mathbb{R}^d)$
\textbf{spreads} if there is $a>0$ such that
\begin{equation}\label{grspre}
 \limsup_{n \to \infty} \|  \chi_{\{x||x| > R\}} f_n \| > a \quad \quad
\textrm{for all $R>0$}.
\end{equation}
(This definition can be found in the papers of Zhislin \cite{zhislin}, who used
the
idea of spreading sequences in his proof of the celebrated HVZ theorem). From
the definition it trivially
follows: (a) if the sequence goes to zero in norm it does not spread; (b)
if $|f_n(x)| \leq \sum_{k=1}^\mathcal{N} g^{(k)}_n (x)$, where each $g^{(k)}_n
(x) \in L^2
(\mathbb{R}^d)$ does not spread and $\mathcal{N}$ is finite, then $f_n$ does not
spread.

\begin{lemma}\label{rsd1}
Suppose the sequence $f_n \in L^2 (\mathbb{R}^d)$ is uniformly norm-bounded and
$|f_n (x)| \leq |f_{n+1} (x)|$. Then $f_n$ does not spread.
\end{lemma}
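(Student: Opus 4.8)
The plan is to use the pointwise monotonicity to dominate every $|f_n|^2$ by a single integrable envelope, and then control the tails uniformly in $n$ by the absolute continuity of the Lebesgue integral.

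First I would introduce the envelope $g(x) := \lim_{n \to \infty} |f_n(x)|^2 = \sup_n |f_n(x)|^2$, which is well defined (with values in $[0,\infty]$) because the hypothesis $|f_n(x)| \le |f_{n+1}(x)|$ makes $|f_n(x)|^2$ nondecreasing in $n$ at every point $x$. Let $M>0$ be a uniform bound with $\|f_n\| \le M$ for all $n$. By the Monotone Convergence Theorem,
\begin{equation}
\int_{\mathbb{R}^d} g \, dx = \lim_{n\to\infty} \int_{\mathbb{R}^d} |f_n|^2 \, dx = \lim_{n\to\infty} \|f_n\|^2 \le M^2 < \infty ,
\end{equation}
so that $g \in L^1(\mathbb{R}^d)$.

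Next, for every $R>0$ and every $n$ the trivial pointwise bound $|f_n|^2 \le g$ yields
\begin{equation}
\bigl\| \chi_{\{x|\,|x|>R\}} f_n \bigr\|^2 = \int_{|x|>R} |f_n|^2 \, dx \le \int_{|x|>R} g \, dx ,
\end{equation}
where the right-hand side is independent of $n$ and tends to $0$ as $R\to\infty$, since $g$ is integrable (apply dominated convergence to $g\,\chi_{\{|x|>R\}} \to 0$, dominated by $g$). Consequently, given any $a>0$ I would choose $R$ so large that $\int_{|x|>R} g \, dx < a^2$; then $\bigl\| \chi_{\{x|\,|x|>R\}} f_n \bigr\| < a$ for all $n$, and in particular $\limsup_{n\to\infty} \bigl\| \chi_{\{x|\,|x|>R\}} f_n \bigr\| \le a$. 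Since $a>0$ was arbitrary, this is exactly the negation of the spreading condition (\ref{grspre}), so $f_n$ does not spread.

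I do not expect a genuine obstacle here: the argument is essentially a one-line application of the Monotone Convergence Theorem followed by the absolute continuity of the integral. The only point demanding care is that the uniform norm bound must be invoked precisely at the monotone-convergence step, since it is what guarantees $g\in L^1$; without the uniform bound the envelope could fail to be integrable and the tails could not be controlled uniformly in $n$.
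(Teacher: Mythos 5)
Your proof is correct, but it follows a genuinely different route from the paper's. The paper argues by contradiction: assuming spreading with constant $a$, it fixes $n$, picks $R$ with $\|\chi_{\{|x|>R\}}f_n\|^2 < a^2/4$, then uses the spreading hypothesis to find $n'>n$ with $\|\chi_{\{|x|>R\}}f_{n'}\|^2 > a^2/2$, and exploits pointwise monotonicity on the ball $\{|x|\le R\}$ to conclude $\|f_{n'}\|^2 \ge \|f_n\|^2 + a^2/4$; iterating this forces the norms to grow without bound, contradicting the uniform norm bound. You instead build the monotone envelope $g=\sup_n|f_n|^2$, use the Monotone Convergence Theorem together with the uniform norm bound to get $g\in L^1$, and then control all tails at once by absolute continuity of the integral. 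Your argument is direct rather than by contradiction and actually yields a stronger conclusion, namely uniform tightness: $\sup_n\|\chi_{\{|x|>R\}}f_n\|\to 0$ as $R\to\infty$, of which non-spreading (the negation of (\ref{grspre})) is an immediate consequence. What the paper's argument buys in exchange is self-containedness — it uses nothing beyond additivity of the squared norm over disjoint sets and the monotonicity hypothesis, with no appeal to measure-theoretic convergence theorems — and it makes quantitative the mechanism by which spreading is incompatible with monotonicity (each spreading event costs a fixed norm increment $a^2/4$). Both proofs are sound; yours is the cleaner statement-level argument, the paper's the more elementary one.
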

\begin{proof}
Let us assume by contradiction that $f_n$ spreads, so that (\ref{grspre}) holds.
Let us fix $n$ and choose $R$ so that $\| \chi_{\{x||x| >
R\}} f_n \|^2 < a^2 /4$. Because the sequence $f_n$ spreads we can
find $n'
> n$ such that $\| \chi_{\{x||x| > R\}} f_{n'} \|^2 > a^2 /2$.
Using that $|f_n |$ is non--decreasing we obtain
\begin{gather}
    \| f_{n'} \|^2 = \| \chi_{\{x||x| \leq R\}} f_{n'} \|^2 +  \|  \chi_{\{x||x|
> R\}} f_{n'} \|^2 \geq  \| \chi_{\{x||x| \leq R\}} f_{n} \|^2 +  \|
\chi_{\{x||x| > R\}} f_{n'} \|^2 \nonumber \\
= \| f_{n} \|^2 -  \| \chi_{\{x||x| > R\}} f_{n} \|^2 + \|
\chi_{\{x||x| > R\}} f_{n'} \|^2 \geq     \| f_{n} \|^2  +
\frac{a^2}4 . \label{0.03}
\end{gather}
Thus for any $f_n$ there exists such $f_{n'}$ with $n' >n$ such that $\| f_{n'}
\|^2 \geq \| f_n
\|^2  + a^2 /4$. But this contradicts $f_n$ being a
norm-bounded sequence.\end{proof}

Here is a stronger version of Lemma~\ref{rsd1}.
\begin{lemma}\label{rsd2}
Suppose the sequence $f_n \in L^2 (\mathbb{R}^d)$ is uniformly norm-bounded and
from any subsequence $ f_{n_k}$ one can extract a sub/subsequence $f_{n_{k_s}}$
such that $|f_{n_{k_s}} (x)| \leq |f_{n_{k_{s+1}}}(x)|$.
 Then $f_n$ does not spread.
\end{lemma}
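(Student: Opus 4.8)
The plan is to reduce the statement to Lemma~\ref{rsd1} by an argument by contradiction. Suppose $f_n$ spreads, so that by the definition~(\ref{grspre}) there is a constant $a>0$ with $\limsup_{n\to\infty}\|\chi_{\{x||x|>R\}}f_n\|>a$ for every $R>0$. First I would specialize this to the integer radii $R=1,2,\ldots$: for each $k$ the condition $\limsup_n\|\chi_{\{x||x|>k\}}f_n\|>a$ guarantees that infinitely many indices $n$ satisfy $\|\chi_{\{x||x|>k\}}f_n\|>a$, so one can select strictly increasing indices $n_1<n_2<\cdots$ with $\|\chi_{\{x||x|>k\}}f_{n_k}\|>a$ for every $k$. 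This singles out a subsequence $f_{n_k}$ whose tail beyond the growing radius $k$ is uniformly bounded below by $a$.

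The key observation is that this tail bound is inherited by every further subsequence, precisely because the cut-off radius increases along $f_{n_k}$. Indeed, fix $R>0$ and let $k_1<k_2<\cdots$ be any subsequence of indices; once $k_s>R$ one has $\{x||x|>k_s\}\subset\{x||x|>R\}$, whence $\|\chi_{\{x||x|>R\}}f_{n_{k_s}}\|\ge\|\chi_{\{x||x|>k_s\}}f_{n_{k_s}}\|>a$. Therefore $\limsup_{s\to\infty}\|\chi_{\{x||x|>R\}}f_{n_{k_s}}\|\ge a$ for every $R>0$, so every subsequence of $f_{n_k}$ itself spreads, with the constant $a/2$ say.

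Finally I would apply the hypothesis of the lemma to the subsequence $f_{n_k}$: it produces a sub/subsequence $f_{n_{k_s}}$ with $|f_{n_{k_s}}(x)|\le|f_{n_{k_{s+1}}}(x)|$. This sub/subsequence is uniformly norm-bounded (being extracted from the uniformly bounded $f_n$) and non-decreasing in modulus, so Lemma~\ref{rsd1} asserts that it does \emph{not} spread. But the previous paragraph shows that it \emph{does} spread, and this contradiction forces the conclusion that $f_n$ cannot spread. The only delicate point is the construction of $f_{n_k}$ together with the verification that its tail estimate survives the passage to the monotone sub/subsequence; this is exactly where the growth of the cut-off radius $k$ is essential, and the remainder is the same bookkeeping as in Lemma~\ref{rsd1}.
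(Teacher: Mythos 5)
Your proof is correct and follows essentially the same route as the paper: assume spreading, extract a subsequence whose tail mass beyond radius $k$ stays above $a$, observe that every further subsequence of it still spreads (since the cut-off radius grows), and then contradict Lemma~\ref{rsd1} via the monotone sub/subsequence supplied by the hypothesis. The only difference is that you spell out the diagonal-type selection and the tail-inheritance estimate that the paper dismisses as "easy to see," which is a faithful filling-in rather than a new argument.
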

\begin{proof}
Again, let us assume by contradiction that $f_n$ spreads. It follows that for $k
= 1,2,\ldots$ and some $a>0$ one can extract a subsequence
$f_{n_k}$ that satisfies $\| \chi_{\{x| |x| \geq k \}} f_{n_k} \| > a$.
On one hand, it is easy to see that every subsequence of $f_{n_k}$ spreads.
On the other hand, by condition of the Lemma $f_{n_k}$ contains a subsequence,
which is non--decreasing and uniformly bounded, and thus cannot spread by
Lemma~\ref{rsd1},
a contradiction. \end{proof}

 We also need the following 
\begin{lemma}\label{lem:5}
Suppose that $N \geq 3$ and a sequence $f_n \in L^2(\mathbb{R}^{3N-6}; \mathbb{C}^{n_s}) \otimes
L^2(\mathbb{R}^{3})$ is uniformly norm--bounded and does not spread. Suppose
additionally that an operator sequence $A_n \colon L^2(\mathbb{R}^{3N-6}; \mathbb{C}^{n_s}) \to
L^2(\mathbb{R}^{3N-6}; \mathbb{C}^{n_s}) $ is such that $\sup_n \| e^{\alpha |x^a|} A_n \| < K$,
where $K, \alpha > 0$ are constants. Then the sequence $(A_n \otimes 1 )f_n $
does not spread.
\end{lemma}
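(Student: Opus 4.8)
The plan is to reduce the statement to a direct tail estimate, splitting the region where the combined coordinate $x=(x^a,R_a)\in\mathbb{R}^{3N-3}$ is large into an $x^a$--tail (controlled by the exponential weight carried by $A_n$) and an $R_a$--tail (controlled by the non--spreading of $f_n$). Set $h_n:=(A_n\otimes1)f_n$. First I would record that $h_n$ is uniformly norm--bounded: since $e^{\alpha|x^a|}\geq1$ one has $\|A_n\|\leq\|e^{\alpha|x^a|}A_n\|<K$, so $\|h_n\|\leq K\sup_m\|f_m\|$. As the spin variables sit in the first tensor factor, every multiplication operator below ($e^{\alpha|x^a|}$ and the characteristic functions in $x^a$) is of the form $(\,\cdot\,)\otimes1$ and respects the $\mathbb{C}^{n_s}$--structure; by the finiteness of $n_s$ it therefore suffices to establish the norm form of non--spreading, $\lim_{R\to\infty}\limsup_{n\to\infty}\|\chi_{\{|x|>R\}}h_n\|=0$. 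Measuring the size of $x$ by $|x^a|+|R_a|$ (equivalent, for the purpose of spreading, to the Euclidean norm) gives $\{|x|>R\}\subseteq\{|x^a|>R/2\}\cup\{|R_a|>R/2\}$, hence
\begin{equation}
\bigl\|\chi_{\{|x|>R\}}h_n\bigr\|\leq\bigl\|\chi_{\{|x^a|>R/2\}}h_n\bigr\|+\bigl\|\chi_{\{|R_a|>R/2\}}h_n\bigr\|.
\end{equation}

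For the first term I would exploit the weight: on $\{|x^a|>R/2\}$ one has $\chi_{\{|x^a|>R/2\}}\leq e^{-\alpha R/2}e^{\alpha|x^a|}$, and since $e^{\alpha|x^a|}(A_n\otimes1)=(e^{\alpha|x^a|}A_n)\otimes1$ has norm $\leq K$,
\begin{equation}
\bigl\|\chi_{\{|x^a|>R/2\}}h_n\bigr\|\leq e^{-\alpha R/2}\bigl\|e^{\alpha|x^a|}(A_n\otimes1)f_n\bigr\|\leq e^{-\alpha R/2}K\sup_m\|f_m\|,
\end{equation}
which tends to $0$ as $R\to\infty$, \emph{uniformly} in $n$. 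For the second term I would use that $\chi_{\{|R_a|>R/2\}}=1\otimes\chi_{\{|R_a|>R/2\}}$ acts on the second factor only and hence commutes with $A_n\otimes1$; thus $\chi_{\{|R_a|>R/2\}}h_n=(A_n\otimes1)\chi_{\{|R_a|>R/2\}}f_n$ and, since $\{|R_a|>R/2\}\subseteq\{|x|>R/2\}$,
\begin{equation}
\bigl\|\chi_{\{|R_a|>R/2\}}h_n\bigr\|\leq\|A_n\|\,\bigl\|\chi_{\{|R_a|>R/2\}}f_n\bigr\|\leq K\bigl\|\chi_{\{|x|>R/2\}}f_n\bigr\|.
\end{equation}

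Combining the two bounds and taking $\limsup_{n\to\infty}$ and then $R\to\infty$, the first contribution vanishes by the uniform exponential decay and the second vanishes because $f_n$ does not spread, i.e.\ $\limsup_n\|\chi_{\{|x|>R/2\}}f_n\|\to0$. This gives $\lim_{R\to\infty}\limsup_n\|\chi_{\{|x|>R\}}h_n\|=0$, so $h_n$ does not spread. I expect the only genuine point needing care---the main obstacle---to be the geometric decoupling of the two unbounded directions: the weight controls growth \emph{only} in the internal coordinate $x^a$, so the whole argument rests on recognizing that the sole remaining direction in which $h_n$ could escape to infinity is $R_a$, where $A_n$ acts trivially, so that escape is inherited directly from $f_n$ and is forbidden by hypothesis. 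A secondary bookkeeping issue is the spin degeneracy, handled by the finiteness of $n_s$ and the fact that all operators respect the tensor splitting, so the estimates pass unchanged through each spin component (a vanishing norm--tail forces every component to be non--spreading, matching the stated definition).
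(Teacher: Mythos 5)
Your proof is correct and takes essentially the same route as the paper's own argument: the same decomposition $|x|=|x^a|+|R_a|$ with $\chi_{\{|x|>R\}}\leq\chi_{\{|x^a|>R/2\}}\otimes 1+1\otimes\chi_{\{|R_a|>R/2\}}$, the exponential weight $e^{\alpha|x^a|}$ absorbing the internal tail uniformly in $n$, and the commutation of $1\otimes\chi_{\{|R_a|>R/2\}}$ with $A_n\otimes 1$ (plus $\|A_n\|\leq K$) passing the exterior tail to the non--spreading hypothesis on $f_n$. The differences are purely cosmetic ($\varepsilon$--$R$ bookkeeping versus the $\limsup$ formulation, and your explicit placement of the factor $K$ in the second estimate).
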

\begin{proof}
The full set of relative coordinates for a given cluster partition is $x = (x^a,
R_a)$ and
$|x| := |x^a| + |R_a|$. For any given $\varepsilon >0$ let us choose $R >0$ so that
the following inequalities
hold
\begin{gather}
\sup_{x^a} \left[ \chi_{\{ x^a |\; |x^a| \geq R\}} e^{-\alpha |x^a|} \right]  <
\varepsilon/(2K\sup_n\|f_n\|) \label{r1}, \\
\left\| \chi_{\{ R_a |\; |R_a| \geq R \}} f_n \right\| < \varepsilon/2 .
\label{r2}
\end{gather}
Note that
\begin{equation}\label{chigl}
    \chi_{\{ x|\; |x| \geq 2 R\}} \leq \chi_{\{ x^a |\; |x^a| \geq R\}}\otimes 1
 +
1\otimes \chi_{\{ R_a |\; |R_a| \geq R\}} . 
\end{equation}
Using (\ref{chigl}) and (\ref{r1})--(\ref{r2}) we obtain
\begin{equation}\label{chigl2}
    \Bigl\| \chi_{\{ x|\; |x| \geq 2 R\}} (A_n \otimes 1) f_n \Bigr\| <
\varepsilon
\end{equation}
for all $n$. \end{proof}
Obviously, Lemma~\ref{lem:5} also holds if we replace $e^{\alpha |x^a|} $ with
$(1+|x^a|)^\alpha $, where $\alpha >0$ is some power. 

\end{document}